\newcounter{Lcount}
\newcommand{\numsquishlist}{
   \begin{list}{\arabic{Lcount}. }
    { \usecounter{Lcount}
 \setlength{\itemsep}{-.1ex}      \setlength{\parsep}{0ex}
      \setlength{\topsep}{0ex}       \setlength{\partopsep}{0ex}
      \setlength{\leftmargin}{1em} \setlength{\labelwidth}{1em}
      \setlength{\labelsep}{0.1em} } }
\newcommand{\numsquishend}{\end{list}}
\newcommand{\squishlist}{
   \begin{list}{$\bullet$}
    { \setlength{\itemsep}{-.1ex}      \setlength{\parsep}{0ex}
      \setlength{\topsep}{0ex}       \setlength{\partopsep}{0ex}
      \setlength{\leftmargin}{.8em} \setlength{\labelwidth}{1em}
      \setlength{\labelsep}{0.5em} } }
\newcommand{\squishend}{\end{list}}
\definecolor{Gray}{gray}{0.85}
\definecolor{LightGray}{rgb}{0.9,0.9,0.9}
\definecolor{LightBlue}{rgb}{0.8,0.8,1}
\newcommand{\ddoip}{{\sc Dominant-distribution ordering inference problem}\xspace}%
\newcounter{problem}
\newenvironment{problem}[1][htb]
  {
   \begin{algorithm2e}[#1]%
   \SetAlFnt{\small}
    \SetAlCapFnt{\small}
    \SetAlCapNameFnt{\small}
    \SetAlCapHSkip{0pt}
  }{\end{algorithm2e}}
\newtheorem{definition}{Definition}
\newtheorem{theorem}{Theorem}[section]
\newtheorem{lemma}[theorem]{Lemma}
\newtheorem{proposition}[theorem]{Proposition}
\newenvironment{proof}[1][Proof]{\begin{trivlist}
\item[\hskip \labelsep {\bfseries #1}]}{\end{trivlist}}
\newcommand{\qed}{\nobreak \ifvmode \relax \else
      \ifdim\lastskip<1.5em \hskip-\lastskip
      \hskip1.5em plus0em minus0.5em \fi \nobreak
      \vrule height0.75em width0.5em depth0.25em\fi}
\begin{document}
%
\title{A nonparametric framework for inferring orders of categorical data from category-real ordered pairs}
%
%
%

\author{Chainarong~Amornbunchornvej\href{https://orcid.org/0000-0003-3131-0370}{\includegraphics[scale=.5]{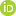}}\IEEEauthorrefmark{1},~\IEEEmembership{Member,~IEEE,}
        Navaporn~Surasvadi\href{https://orcid.org/0000-0002-8075-9839}{\includegraphics[scale=.5]{FIG/ORCIDiD_icon16x16.png}},
        Anon~Plangprasopchok\href{https://orcid.org/0000-0001-6659-580X}{\includegraphics[scale=.5]{FIG/ORCIDiD_icon16x16.png}},
        and~Suttipong~Thajchayapong\href{https://orcid.org/0000-0002-1505-4701}{\includegraphics[scale=.5]{FIG/ORCIDiD_icon16x16.png}},~\IEEEmembership{Member,~IEEE}
\thanks{All authors are at National Electronics and Computer Technology Center (NECTEC),
Pathum Thani, 12120, Thailand.}
\thanks{\IEEEauthorrefmark{1}Corresponding author (email: chainarong.amo@nectec.or.th)}
\thanks{Manuscript received November 15, 2019; revised November 15, 2019.}}

%
%

\markboth{ArXiv STAT.ME,~Vol.~X, No.~Y, November~2019}%
{C. Amornbunchornvej \MakeLowercase{\textit{et al.}}: Bare Demo of IEEEtran.cls for IEEE Journals}
%



\maketitle

\begin{abstract}
Given a dataset of careers and incomes, how large a difference of income between any pair of careers would be? Given a dataset of travel time records, how long do we need to spend more when choosing a public transportation mode $A$ instead of $B$ to travel? In this paper, we propose a framework that is able to infer orders of  categories as well as magnitudes of difference of real numbers between each pair of categories using Estimation statistics framework. Not only reporting whether an order of categories exists, but our framework also reports the magnitude of difference of each consecutive pairs of categories in the order. In large dataset, our framework is scalable well compared with the existing framework. The proposed framework has been applied to two real-world case studies: 1) ordering careers by incomes based on information of 350,000 households living in Khon Kaen province, Thailand, and 2) ordering sectors by closing prices based on 1060 companies' closing prices of NASDAQ stock markets between years 2000 and 2016. The results of careers ordering show income inequality among different careers. The stock market results illustrate dynamics of sector domination that can change over time. Our approach is able to be applied in any research area that has category-real ordered pairs. Our proposed \textit{Dominant-Distribution Network} provides a novel approach to gain new insight of analyzing category orders. The software of this framework is available for researchers or practitioners within R package: EDOIF.
\end{abstract}

\begin{IEEEkeywords}
Bootstrapping, Nonparametric statistics, Estimation statistics, Ordering Inference
\end{IEEEkeywords}

%
\IEEEpeerreviewmaketitle

\section{Introduction}
%
%
%
%


\IEEEPARstart{W}{e} use an order of items with respect to their specific properties all the time to make our decision. For instance, when we plan to buy a new house, we might use an ordered list of houses based on their price or distance from a downtown. We might use travel times to order the list of transportation mode to decide which option is the best to travel from A to B, etc. 

Ordering is related to the concept of Partial order or poset~\cite{partialOrder} in Order theory. The well-known form of poset is a Directed acyclic graph (DAG) that is widely used in studying of causality~\cite{pearl2009causality,peters2017elements}, animal behavior~\cite{Amornbunchornvej:2018:CED:3234931.3201406}, social networks~\cite{kempe2003maximizing,berger2006framework}, etc. Additionally, in social science, ordering of careers based on incomes can be applied to the study of inequality in society (see Section~\ref{sec:KhonKeanCaseStudy}). 

Hence, ordering is an important concept that is used daily and  can impact society decision and scientific research. However, in the Era of Big data, inferring orders of categories items based on their real-value properties from large datasets is far from trivial.      

In this paper, we investigate the problem of inferring an order of categories based on their real-value properties, \ddoip, using poset~\cite{partialOrder} concept as well as estimating a magnitude of difference between any pair of categories. We also propose a \textit{Dominant-Distribution Network} as a representation of dominant category orders. We develop our framework based on a new concept of statistics named \textit{Estimation Statistics} principle. The aim of estimation statistics is to resolve issues of the traditional methodology, null hypothesis significance testing (NHST), that focuses on using p-value to make a dichotomous yes-no question (see Section~\ref{sec:relatedwork}).

\begin{framed}  
\noindent {\ddoip:} {In order to say that one category dominates another,  real values from one category must have higher values than other values from another category, with high probability on average (see Figure~\ref{fig:DenNoise01}). {\bf Given a set of order pairs of category-real values, the goal is to find an order list of categories with respect to their real-value distributions. If category $A$ dominates category $B$ in the list, then a probability that real-number values from $A$ is greater than an expectation of $B$'s distribution is high and not vice versa.}}
\end{framed}

In the aspect of scalability, our framework can finish analysing a dataset of 10,000 data points using 11 seconds while a candidate approach needs 300 seconds for the same dataset. The software of our proposed framework is available for researchers and practitioners with a user-friendly R package: EDOIF at~\cite{SharedLink}.

This paper is organized as follows. Section~\ref{sec:relatedwork} reviews related work, analyzing the existing gaps and how our contributions address them. Then, Section~\ref{sec:method} describes our proposed framework. Experimental setup is shown in Section~\ref{sec:expsetup} where corresponding results are discussed in Section~\ref{sec:result}. Finally, Section~\ref{sec:conclusion} concludes this paper.

\section{Related works}
\label{sec:relatedwork}
There are several NHST frameworks in both parametric (e.g. Student's t-test~\cite{10.2307/2331554}) and nonparametric (Mann-Whitney test~\cite{mann1947}) types that are able to compare two distributions and report whether one has a greater sample mean or median than another using p-value. Nevertheless, these approaches are not capable of providing a magnitude of mean difference between two distributions. Moreover, there are several issues of using only p-values to compare distributions. For instance, a null hypothesis might always get rejection since, in some system, there is always some effect in a system but an effect might be too small~\cite{cohen1995earth}. The NHST also treats distribution comparison as a dichotomous yes-no question and ignores a magnitude of difference, which might be an important information~\cite{ellis2010essential} for a research question. Besides, using only p-value information is a major issue of lack of repeatability in many research publications~\cite{halsey2015fickle}.

Hence, \textit{Estimation Statistics} has been developed as an alternative methodology to NHST. The estimation statistics is considered to be more informative than NHST~\cite{cumming2013understanding,claridge2016estimation,ho2019moving}. The primary purpose of Estimation method is to determine magnitudes of difference among distributions in terms of point estimates and  confidence intervals rather than reporting only p-value in NHST.

\begin{figure}[!ht]
\centering
\includegraphics[width=1\columnwidth]{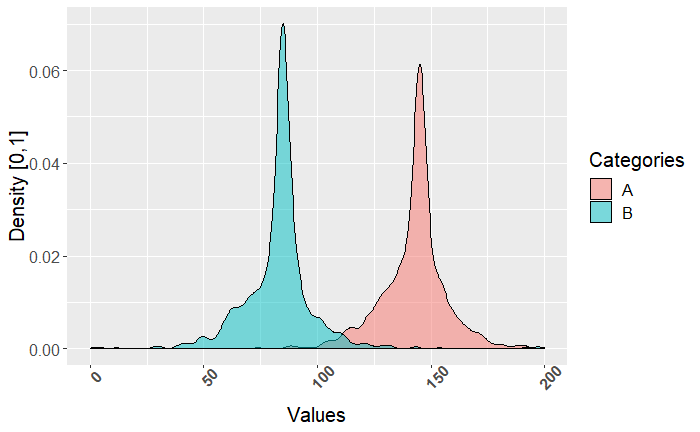}
\caption{An example of distribution of category $A$ dominates distribution of category $B$. A probability of finding a data point in $A$ that greater than $E[B]$ is greater than a probability of finding a data point in $B$ that greater than $E[A]$. }
\label{fig:DenNoise01}
\end{figure}

Recently, the Data Analysis using Bootstrap-Coupled ESTimation in R (DABESTR) framework~\cite{ho2019moving}, which is Estimation Statistics, has been developed. It mainly uses Bias-corrected and accelerated (BCa) bootstrap~\cite{doi:10.1080/01621459.1987.10478410} as a main approach to estimate a confidence interval of mean difference between distributions. BCa bootstrap is robust to a skew issue in the distribution~\cite{doi:10.1080/01621459.1987.10478410} than a percentile confidence interval and other approaches. However, it is not obvious whether BCa bootstrap is better than other approaches in the task of inferring a confidence interval of mean difference when two distributions have a high level of uniform noise (see Figure~\ref{fig:DenNoiseCom}). Moreover, DABESTR is not scalable well when there are many pairs of distributions to compare; it cannot display all confidence intervals of mean difference of all pairs in a single plot. Another issue of using BCa bootstrap is that it is too slow (see Section~\ref{sec:runningtime}) in practice compared to other approaches.  There is also no problem formalization of \ddoip, which should be considered as a problem that can be formalized by Order theory, using of partial order concept~\cite{partialOrder}.

\begin{figure*}
\centering
\includegraphics[width=1\textwidth]{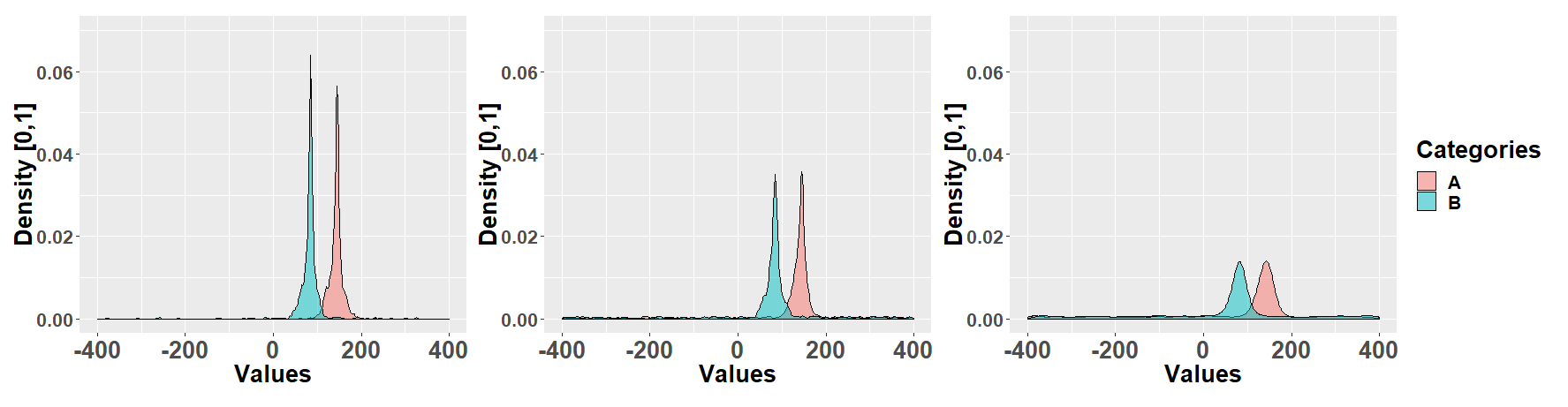}
\caption{An example of distribution of category $A$ dominates distribution of category $B$ with different degrees of uniform noise w.r.t. total data density: (left) 1\%, (middle) 20\%, and (right) 40\% of noise. The higher degree of uniform noise, the harder it is to distinguish whether $A$ dominates $B$.}
\label{fig:DenNoiseCom}
\end{figure*}

\subsection{Our Contributions}
 To fill theses gaps in the field, in this paper, we formalize \ddoip using partial order concept~\cite{partialOrder} in order theory (see Appendix~\ref{sec:probformal}). We provide a framework as a solution of \ddoip. Our framework is a non-parametric framework based on the bootstrap principle that has no assumption regarding models of data (see Appendix~\ref{sec:statInference}). We also propose to represent a dominant order with \textit{Dominant-Distribution Network} (Definition~\ref{def:DomDistNet}). Our proposed framework is capable of:
\squishlist
\item {\bf Inferring an order of multiple categories:} inferring orders of domination of categories and representing orders in the form of a graph;
\item {\bf Estimating a magnitude of difference between a pair of categories:} estimating confidence intervals of mean difference for all pairs of categories; and
\item {\bf Visualizing a network of dominant orders and magnitudes of difference among categories:} visualizing dominant orders in one graph entitled,  \textit{Dominant-Distribution Network}, as well as illustrating all magnitudes of difference of all categories pairs within a single plot that no other framework is capable of.  
\squishend

We evaluate our framework in the aspect of sensitivity analysis of uniform noise using simulation data that we posses the ground truth and compare it against several methods. To demonstrate real-world applications of our framework, we also provide two case studies.  The first is the story of inferring income orders of household careers in order to measure income inequality in Khon Kaen province, Thailand based on surveys of 350,000 households. Another case study is to use our framework to study dynamics of sector domination in NASDAQ stock market using the 1060 companies stock-closing prices between 2000 and 2016. The assessment on these two independent/irrelevant domains indicates the potential that our framework is applicable to any field of study that requires ordering of categories based on real-value data. Our \textit{Dominant-Distribution Network} (Definition~\ref{def:DomDistNet}) provides a novel approach to gain insight of analyzing category orders.

\subsection{Why confidence intervals?}
We can simply just order categories by their means or medians. However, comparing only means cannot tell us how much overlapping areas two categories have. Hence, we need mean confidence intervals to approximate the overlapping areas as well as using mean-difference confidence intervals to tell magnitude of difference between two categories. Additionally, if there are many categories and we want to infer how much pairs of categories always dominate others, then we can use a network to represent these dominant relationships. In this paper, we propose a network called a \textit{Dominant-distribution network} to represent dominant relationships among categories. 
\section{Methods}
\label{sec:method}

\begin{figure*}[!ht]
\centering
\includegraphics[width=\textwidth]{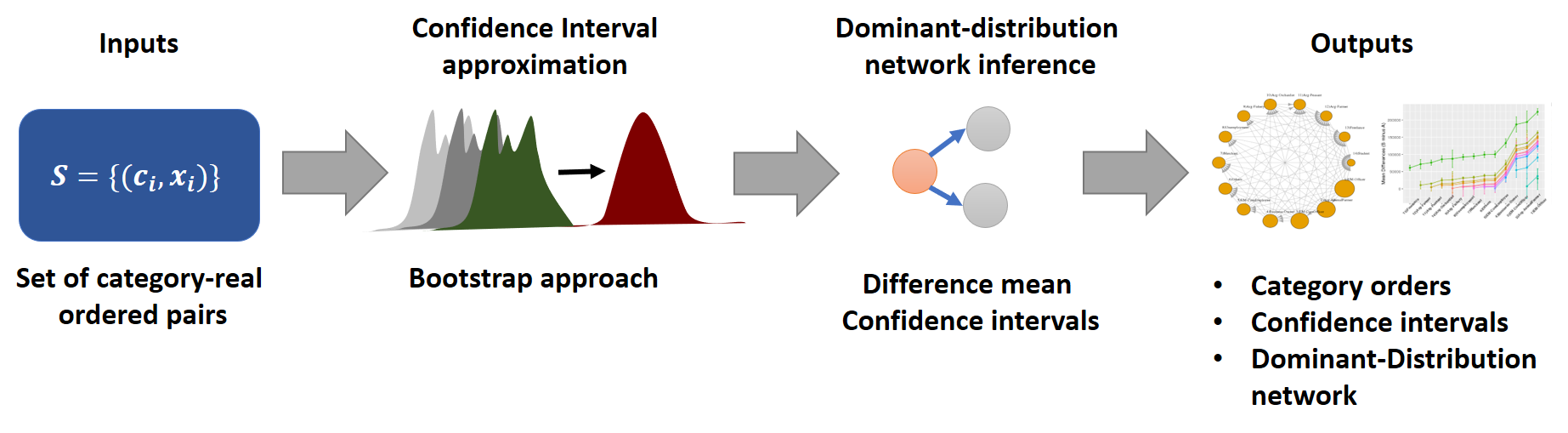}
\caption{A high-level overview of the proposed framework. }
\label{fig:FWdiagram}
\end{figure*}

For any given pair of categories $A,B$, we define an order that category $A$ dominates category $B$ using their real random variables as follows.

\begin{definition}[Dominant-distribution relation]
\label{def:EG-relation}
Given two continuous random variables $X_1 \sim \mathcal{D}_1$ and $X_2 \sim \mathcal{D}_2$ where $D_1,D_2$ are distributions. Assuming that $D_1$ and $D_2$ have the following property: $P(X_1\geq E[X_1])=P(X_2\geq E[X_2])$. We say that $D_2$ dominates $D_1$ if $P(X_1\geq E[X_2])\leq P(X_2\geq E[X_1])$; denoting $D_1 \preceq D_2$. We denote $D_1 \prec D_2$ if $P(X_1\geq E[X_2])< P(X_2\geq E[X_1])$. 
\end{definition}

Since a dominant-distribution relation is a partial order relation (Theorem~\ref{thm:prtorder}), an order always exists in any given set of ordered pairs of category and real number. For each pair of  category $A$ and $B$, we can use a bootstrap approach to infer whether $A \preceq B$ as well as using an inferred confidence interval from bootstrapping to represent a magnitude of difference between $A$ and $B$ (see Appendix~\ref{sec:statInference}).

We propose the Empirical Distribution Ordering Inference Framework (EDOIF), as a solution of \ddoip using bootstrap and additional non-parametric method. Fig~.\ref{fig:FWdiagram} illustrates an overview of our framework. Given a set of order pairs of category-real values $S=\{(c_i,x_i)\}$  as inputs of our framework where $c_i\in \mathcal{C}$ s.t. $\mathcal{C}=\{c\}$ is a set of category classes, and $x_i \in \mathbb{R}$, in this paper, we assume that for any pair $(c_i,x_i),(c_j,x_j)$ if $c_i=c_j=c$, then both $x_i$ and $x_j$ are realizations of random variables from the distribution $D'_c$.

In the first step, we infer a sample-mean confidence interval of each $D'_c$ and a mean-difference confidence interval between each pair of $D'_a$ and $D'_b$ (Section~\ref{sec:confsec}). Then, in Section~\ref{sec:DDNet}, we provide details regarding the way to infer the Dominant-distribution network.

\subsection{Confidence interval inference}
\label{sec:confsec}

\setlength{\intextsep}{0pt}
\IncMargin{1em}
\begin{algorithm2e}
\caption{ MeanBootstrapFunction}
\label{algo:MeanBootstrapFunc}
\SetKwInOut{Input}{input}\SetKwInOut{Output}{output}
\Input{ $D'=\{x_i\}$, $K$, and $\alpha$ }
\Output{ $D,CI_{\mu}$}
\begin{small}
\SetAlgoLined
\nl Setting $D= \emptyset$\;
\nl \For{ $k=1$ to $k=K$}
{
\nl    Get $D'_k$ by sampling $D'$ with replacement\;
\nl    Compute a sample mean of $D'_k$: $\bar{X}_k$\;
\nl    Add $\bar{X}_k$ to $D$\;
}
\nl Infer $(1-\alpha)100$-confidence interval of $\mu$, denoted $CI_{\mu}$, from $D$\;
\nl Return $D,CI_{\mu}$\;
\end{small}
\end{algorithm2e}\DecMargin{1em}

We separate a set $S=\{(c_i,x_i)\}$ into $D'_1,\dots,D'_C$ where $D'_c=\{x_i\}$ is a set of data point $x_i$ that has a category $c$ in $S$. We sort $D'_1,\dots,D'_C$ based on their sample means s.t. $\bar{X}_{p}\leq \bar{X}_{p+1}$ where $\bar{X}_{p},\bar{X}_{p+1}$ are sample means of $D'_p,D'_{p+1}$ respectively.

For each $D'_c$, we perform the bootstrap approach (Appendix~\ref{sec:bootstrap}) to infer the sample mean distribution $D_c$ and its $(1-\alpha)100$-confidence interval. Given $X_c\sim D_c$ and $\mu_c=E[X_c]$, the framework infers the confidence interval of $\mu_c$  w.r.t. $D_c$  denoted $CI_{\mu_c}$. Algorithm~\ref{algo:MeanBootstrapFunc} illustrates the details of how to infer $CI_{\mu_c}$ using the bootstrap approach.

\setlength{\intextsep}{0pt}
\IncMargin{1em}
\begin{algorithm2e}
\caption{ MeanDiffBootstrapFunction}
\label{algo:MeanDiffBootstrapFunc}
\SetKwInOut{Input}{input}\SetKwInOut{Output}{output}
\Input{ $D'_p,D'_q$, $K$, and $\alpha$ }
\Output{ $D_Y,CI_{\bar{Y}}$}
\begin{small}
\SetAlgoLined
\nl Setting $D_Y= \emptyset$\;
\nl \For{ $k=1$ to $k=K$}
{
\nl     Get $D'_{p,k}$ by sampling $D'_p$ with replacement\;
\nl     Get $D'_{q,k}$ by sampling $D'_q$ with replacement\;
\nl     Compute a sample means of $D'_{p,k},D'_{q,k}$: $\bar{X}_{p,k}$ and $\bar{X}_{q,k}$\;
\nl     Add the mean difference $\bar{X}_{q,k}-\bar{X}_{p,k}$ to $D_Y$\;
}
\nl Infer $(1-\alpha)100$-confidence interval of $\mu_Y$, denoted $CI_{\bar{Y}}$, from $D_Y$\;
\nl Return $D_Y,CI_{\bar{Y}}$\;
\end{small}
\end{algorithm2e}\DecMargin{1em}

In the next step, we infer an $\alpha$-mean-difference confidence interval of each pair $D'_p,D'_q$.  

Given $D_p,D_q$ are sample-mean distributions that are obtained by bootstrapping  $D'_p,D'_q$ respectively, $X_p~\sim D_p,X_q\sim D_q$, $Y=X_q-X_p$, and $\mu_Y=E[Y]$. 

The framework uses the bootstrap approach to infer sample-mean-difference distribution of $Y$ and the  $(1-\alpha)100$-confidence interval of $\mu_Y$. Algorithm~\ref{algo:MeanDiffBootstrapFunc} illustrates the details of how to infer $CI_{\bar{Y}}$ using the bootstrap approach in general.

Even though we can use a normal confidence interval as a confidence interval in line 6 of Algorithm~\ref{algo:MeanBootstrapFunc} and line 7 of Algorithm~\ref{algo:MeanDiffBootstrapFunc} (see Lemma~\ref{lemma:confintv}), the normal bound has an issue when a distribution is skew~\cite{ho2019moving,doi:10.1080/01621459.1987.10478410}. Hence, we deploy both percentile confidence intervals and Bias-corrected and accelerated (BCa) bootstrap~\cite{doi:10.1080/01621459.1987.10478410} to infer both confidence intervals: $CI_{\mu_c}$ and $CI_{\bar{Y}}$. \\

For a percentile confidence interval inference (our default option) and BCa bootstrap, we deploy a standard library of bootstrap approaches in R ``boot" package~\cite{r2011r,davison1997bootstrap,Rboot1}.

\subsection{Dominant-distribution network inference}
\label{sec:DDNet}

The first step of inferring Dominant-distribution network $G=(V,E)$ in Definition~\ref{def:DomDistNet} is to infer whether $D_p\preceq_{\alpha} D_q$. 

In the network $G=(V,E)$, a node $p \in V$ represents $D_p$ and $(q,p)\in E$ if $D_p \preceq_\alpha D_q$.

Given $X_p~\sim D_p,X_q\sim D_q$, $Y=X_q-X_p$, we can check the normal lower bound of $CI_{\bar{Y}}$ in Lemma~\ref{lemma:confintv} that we mentioned in Section~\ref{sec:DDrelaionInfer}. If the lower bound $\bar{Y} - z_{\frac{\alpha}{2} } \frac{s_{Y} }{\sqrt{k}}$ is greater than zero, then $D_p\preceq_{\alpha} D_q$. However, we deploy Mann-Whitney test~\cite{mann1947} to infer whether $D_p\preceq_{\alpha} D_q$ due to its robustness (see the Result Section). Along with Mann-Whitney test~\cite{mann1947}, we also deploy p-value adjustment method by Benjamini and Yekutieli (2001)~\cite{benjamini2001control} to reduce the false positive issue.

In the next step, for each $D_p$, we add node $p$ to $V$. For any pair $D_p,D_q$, if $D_p \preceq_{\alpha} D_q$, then $(q,p) \in E$. One of the properties we have for $G$ is that the set of nodes that are reachable by the path from $q$ is a set of distributions of which $D_q$ dominates them. 

\subsection{Visualization}

We use ggplots package~\cite{wickham2016ggplot2} to create mean confidence intervals (e.g. Figure~\ref{fig:KhonkeanCIraw}) and mean-difference confidence intervals (e.g. Figure~\ref{fig:KhonKeanDiffMeans}) plots. For a dominant-distribution network, we visualize it using iGraph package~\cite{csardi2006igraph} (e.g. Figure~\ref{fig:KhonkeanDNet}).

\section{Experimental setup}
\label{sec:expsetup}
We use both simulation and real-world datasets to evaluate our method performance. 
\subsection{Simulation data for sensitivity analysis}
\label{sec:simdata}
We simulated datasets from mixture distributions, which consists of a normal distribution, Cauchy distribution, and uniform distribution.  The random variable $X$ is defined as follows.

\begin{equation}
  X\sim\left\{
  \begin{array}{@{}ll@{}}
	\mathcal{N}(\mu_0,\sigma_0), & \text{with probability } 0.5 \\
   \mathcal{C}(x_0,\gamma), & \text{with probability } (0.5 - p_1)\\
   \mathcal{U}(L_1,U_1), & \text{with probability } p_1 \\
  \end{array}\right.
\label{eq:d1}
\end{equation}

Where $\mathcal{N}(\mu_0,\sigma_0)$ is a normal distribution with mean $\mu_0$ and variance $\sigma_0^2$, $\mathcal{C}(x_0,\gamma)$ is a Cauchy distribution with location $x_0$ and scale $\gamma$, $\mathcal{U}(L_1,U_1)$ is a uniform distribution with the minimum number $L_1$ and maximum number $U_1$, and $p_1$ is a value that represents a level of uniform noise. When the $p_1$ increases, the ratio of uniform distribution in the mixture distribution increases.   We set $p_1=\{0.01,0.05,0.10,0.15,0.20,0.25,0.30,0.35,0.40\}$ to generate simulation datasets in order to perform the sensitivity analysis. 

\begin{figure}[!ht]
\centering
\includegraphics[width=.4\columnwidth]{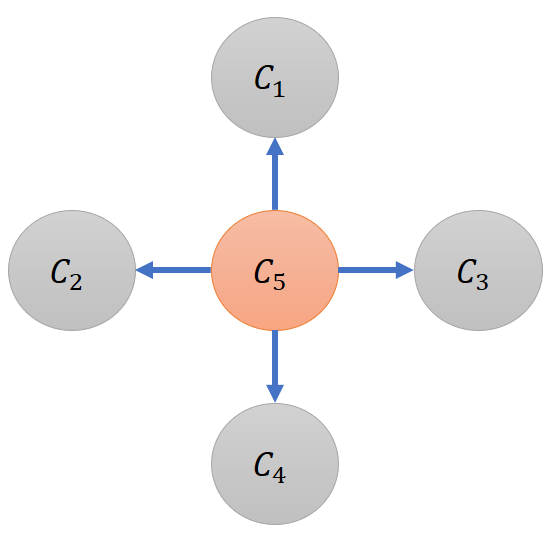}
\caption{A dominant-distribution network $G$ of simulation datasets}
\label{fig:Simgraph}
\end{figure}

In all simulation datasets, there are five categories: $C_1,\dots,C_5$. The dominant-distribution relations of these categories are represented as a dominant-distribution network $G$. The network $G$ is shown in Figure~\ref{fig:Simgraph}. Only $C_5$ dominates others.
In this paper, for $C_1,\dots,C_4$, we set $\mu_0=80,\sigma_0=16,x_0=85,\gamma=2,L_1=-400,U_1=400$ to generate realizations of $X$. For $C_5$,  we set $\mu_0=140,\sigma_0=16,x_0=145,\gamma=2,L_1=-400,U_1=400$.

Because uniform distribution in the mixture distribution has the range between -400 and 400, but all areas of distributions of $C_1,\dots,C_5$ are within $[-400,400]$, a method has more issue to distinguish whether $C_i \preceq C_j$ for any  $C_i,C_j \in\{C_1,\dots,C_5\}$ when we increase $p_1$ (see Fig~\ref{fig:DenNoiseCom}).

The main task of inference here is to measure whether a given method can infer that $C_i \preceq C_j$ w.r.t. a network in Figure~\ref{fig:Simgraph} from these simulation datasets. We generate 100 datasets for each different value of $p_1$. In total, there are 900 datasets. 

To measure the performance of ordering inference, we define true positive (TP), false positive (FP), and false negative (FN) in order to calculate precision, recall, and F1 score as follows. Given any pair of categories $C_i,C_j$, TP is when both ground truth (Figure~\ref{fig:Simgraph}) and inferred result agree that $C_i \preceq C_j$ is true. FP is  when a method infers that $C_i \preceq C_j$ but the ground truth disagrees. FN is when the ground truth has $C_i \preceq C_j$ but an inferred result from the method disagrees. 

In the task of inferring whether $C_i \preceq C_j$, we compared our approach (Mann-Whitney test~\cite{mann1947} with p-value adjustment method~\cite{benjamini2001control}) against 1) t-test with Pooled Standard Deviation~\cite{cohen1998statistical}, 2) t-test with p-value adjustment~\cite{benjamini2001control}, 3) BCa bootstrap, and 4) percentile bootstrap. For both BCa bootstrap, and percentile bootstrap, we decide whether $C_i \preceq C_j$ based on the lower bound of confidence intervals of mean difference between $C_i$ and $C_j$. If the lower bound is positive, then $C_i \preceq C_j$, otherwise, $C_i \not\preceq C_j$. 

\subsection{Real-world data: Thailand's population household information}
\label{sec:realdata1}
This dataset was obtained from Thailand household-population surveys from Thai government in 2018~\cite{amornbunchornvej2019identifying}. The purpose of this survey was to analyze the Multidimensional Poverty Index (MPI)~\cite{alkire2010multidimensional,alkire2018global}, which is considered as a current main poverty index that the United Nations (UN) uses. We deployed the data of household incomes and careers information from 353,910 households of Khon Kaen province, Thailand to perform our analysis. We categorized careers of heads of household into 14 types: student (student), freelance (Freelance), plant farmer (AG-Farmer), peasant (AG-Peasant), orchardist (AG-Orchardist), fishery (AG-Fishery), animal farmer (AG-AnimalFarmer), unemployment (Unemployment), merchant (Merchant), company employee (EM-ComEmployee),  business owner (Business-Owner), government's company employee (EM-ComOfficer), government officer (EM-Officer), and others (Others). The incomes in this dataset are annual incomes of households and the unit of incomes is in Thai Baht (THB). 

Given a set of ordered pairs of career and household income, we analyzed the income gaps of different types of careers in order to study the inequality of population w.r.t. people careers.

\subsection{Real-world data: NASDAQ Stock closing prices}
\label{sec:realdata2}
This NASDAQ stock-market dataset has been obtained by the work in~\cite{Amornbunchornvej:2018:CED:3234931.3201406} from Yahoo! Finance.\footnote{http://finance.yahoo.com/} The dataset was collected from January 2000 to January 2016. It consist of a set of time series of stock closing prices of 1060 companies. Each company time series has a total length as 4169 time-steps. Due to the high variety of company sectors, in this study, we separated these time series into five sectors: `Service \& Life Style', `Materials', `Computer', `Finance', and `Industry \& Technology'.  

In order to observe the dynamics of domination, we separated time series into two intervals: 2000-2014, and 2015-2016. For each intervals, we aggregated the entire time series using median. 

Given a set of ordered pairs of closing-price median and sector, the purpose of this study is to find which sectors dominated others in each interval.

\subsection{Parameter settings}
\label{sec:para}

We set the significant level $\alpha =0.05$ and the number of times of sampling with replacement for a bootstrap approach is $1000$ for all experiments unless stated otherwise.

\subsection{Running time}
\label{sec:runningtime}
\begin{figure}[!ht]
\centering
\includegraphics[width=1\columnwidth]{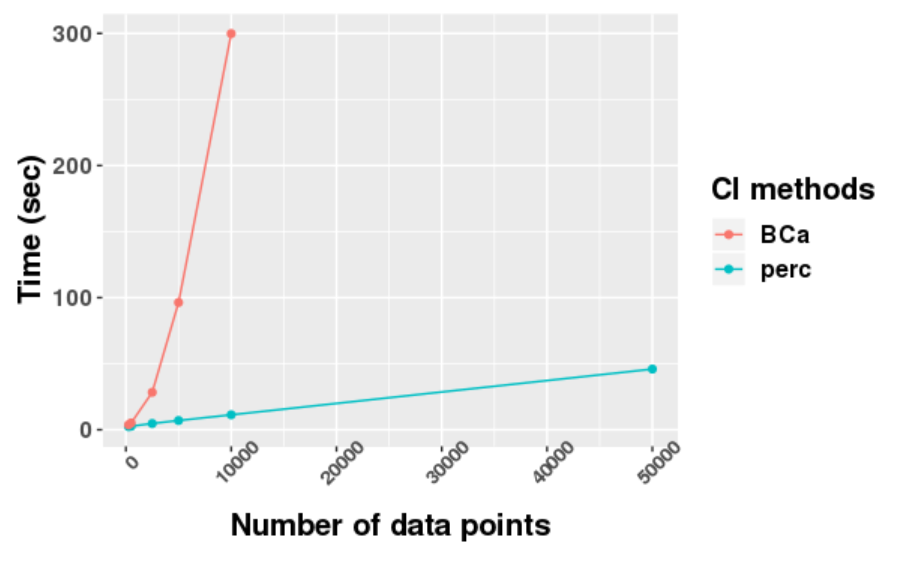}
\caption{A comparison of running time between two methods of Bootstrap confidence intervals.}
\label{fig:TimeCost}
\end{figure}

In this experiment, we compared the running time of two methods of bootstrapping to infer confidence intervals: BCa bootstrap (BCa) and percentile (perc) approaches using simulation datasets from the previous section.\footnote{The computer specification that we used in this experiment is Dell 730, with CPU Intel Xeon E5-2630 2.4GHz, and Ram 128 GB.} We set the number of times of bootstrapping as 4000 rounds. In Figure~\ref{fig:TimeCost}, the result is shown that BCa method was a lot slower than the percentile approach. In the dataset of 10,000 data points, the BCa bootstrap required the running time around 300 seconds while the percentile approach required only 11 seconds. Besides, for a dataset that has 500,000 data points, percentile approach was able to finish running around 11 minutes. This indicates that the percentile approach is scalable better than BCa bootstrap. Hence, for a large dataset, we recommend users to use the percentile approach since it is fast and the performance is comparable or even better than BCa method that we will show in the next section.
\section{Results}
\label{sec:result}
\subsection{Simulation results}
In this section, we report the results of our analysis from simulation datasets (Section~\ref{sec:simdata}). The main task is the ordering inference; determining whether $A \preceq B$ for all pairs of categories.

\begin{table}
\caption{The categories ordering inference result; each approach is used to infer orders of any pair of two categories w.r.t. the real-values within each category.}
\label{tab:CatOrInferRes}
\begin{tabular}{c|c|c|c|}
\cline{2-4}
                                      & Precision & Recall & F1 scores \\ \hline
\multicolumn{1}{|c|}{ttest (pool.sd)} & 0.61      & 0.52   & 0.55      \\ \hline
\multicolumn{1}{|c|}{ttest}           & 0.72      & 0.72   & 0.72      \\ \hline
\multicolumn{1}{|c|}{Bootstrap: BCa}  & 0.70      & 0.67   & 0.68      \\ \hline
\multicolumn{1}{|c|}{Bootstrap: Perc} & 0.73      & 0.68   & 0.70      \\ \hline
\multicolumn{1}{|c|}{EDOIF (Mann-Whitney)}    & 0.77      & 0.85   & 0.81      \\ \hline
\end{tabular}
\end{table}

Table~\ref{tab:CatOrInferRes} illustrates the categories ordering inference result. Each value in the table is the aggregate results of datasets from different values of $p_1$: $p_1=\{0.01,0.05,0.10,0.15,0.20,0.25,0.30,0.35,0.40\}$. The table shows that our approach (using Mann-Whitney) performance is above all approaches.  While ttest (pool.sd) performed the worst, the traditional t-test performed slightly better than both bootstrap approaches. Comparing between BCa and percentile bootstraps, the performance of percentile bootstrap is slightly better than BCa bootstrap. Even though BCa bootstrap covers the skew issue better than percentile bootstrap~\cite{ho2019moving,doi:10.1080/01621459.1987.10478410}, our result indicates that percentile bootstrap is more accurate than BCa when the noise presents in the task of ordering inference.

\begin{figure}
    \centering
    \includegraphics[width=.8\columnwidth]{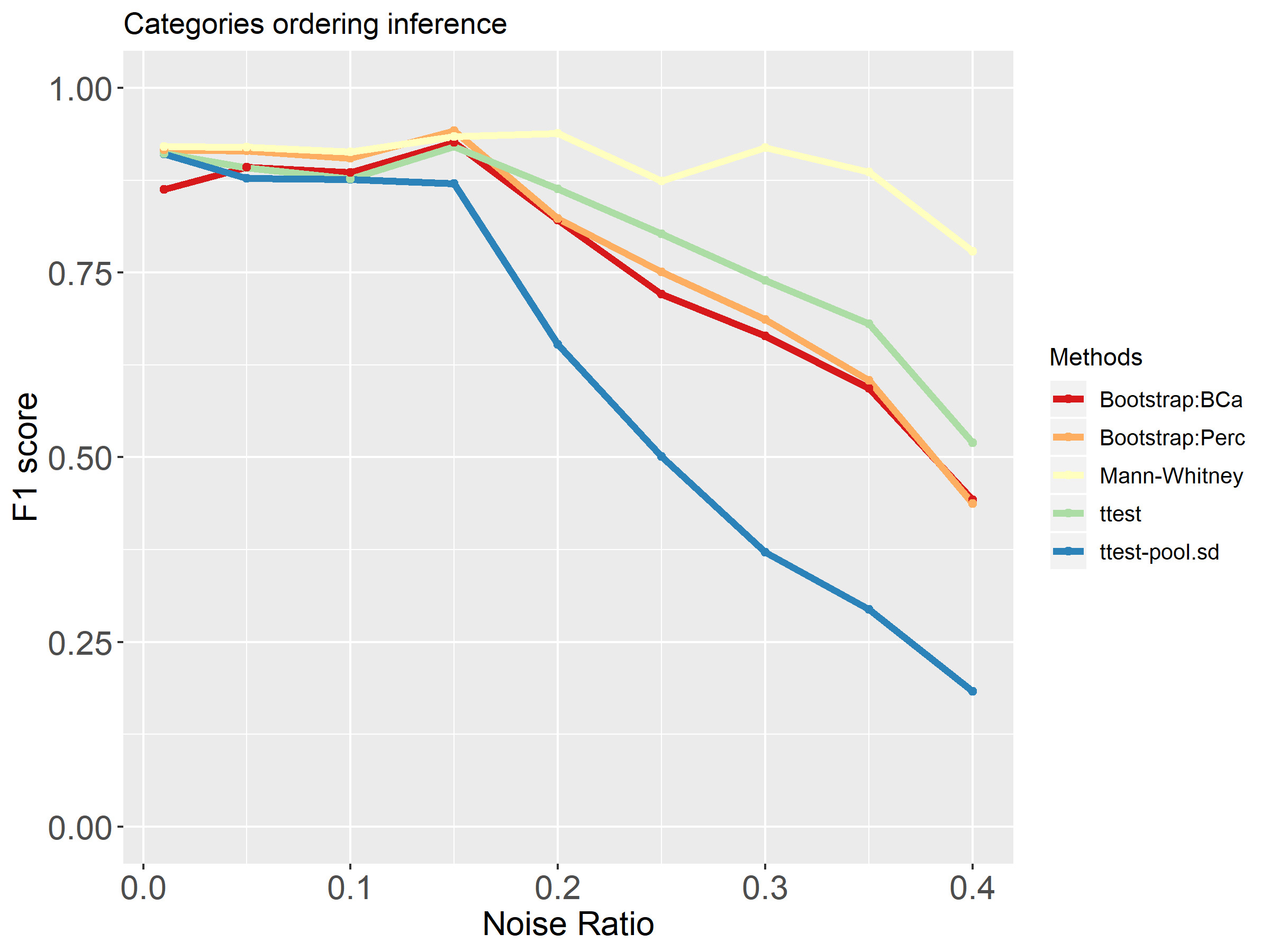}
    \caption{The sensitivity analysis of categories ordering inference. The simulation datasets containing different levels of noise were deployed for the experiment (best viewed in colour codes). }
    \label{fig:F1OrderInferRes}
\end{figure}

Figure~\ref{fig:F1OrderInferRes} shows the result of sensitivity analysis of all approaches when the uniform noise presents in different degrees. The horizontal axis represents noise ratios and the vertical axis represents F1 score in the task of ordering inference. According to Figure~\ref{fig:F1OrderInferRes}, our approach (using Mann-Whitney) performed better than all methods in all levels of noise. t-test preformed slightly better than both bootstraps approaches.  Both bootstrap methods performance are quite similar. The t-test with (pool.sd) performed the worst. Both Table~\ref{tab:CatOrInferRes} and Figure~\ref{fig:F1OrderInferRes} illustrate the robustness of our approach.

\subsection{Case study: Ordering career categories based on Thailand's household incomes  in Khon Kaen province}
\label{sec:KhonKeanCaseStudy}

\begin{figure}[!ht]
\centering
\includegraphics[width=1\columnwidth]{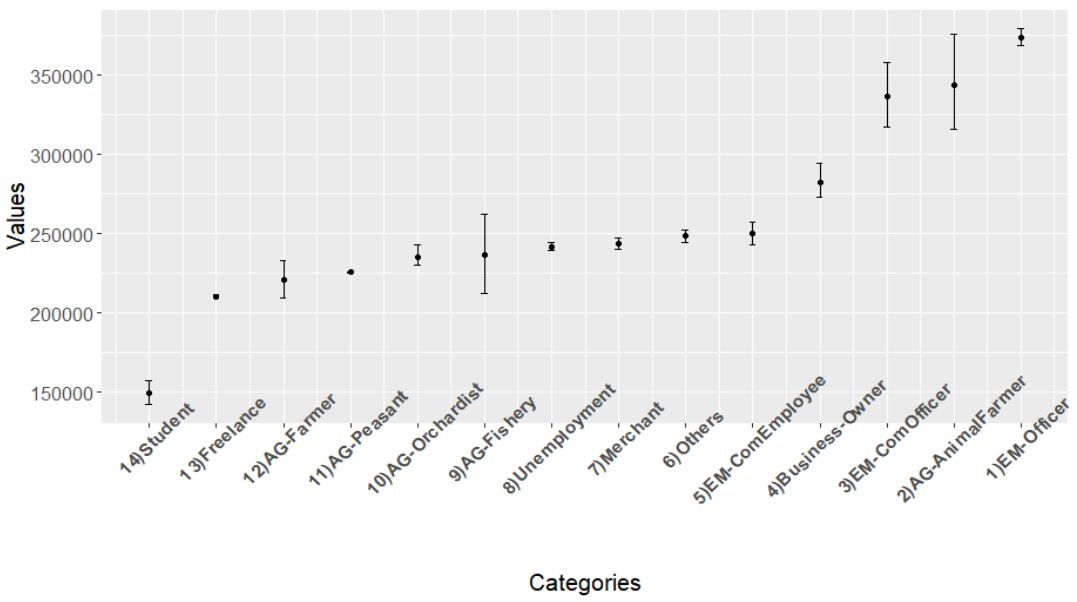}
\caption{Confidence intervals of household incomes of the population from Khon Kaen province categorized by careers. }
\label{fig:KhonkeanCIraw}
\end{figure}

In this section, we report the orders of careers based on incomes of population in Khon Kaen province, Thailand. Due to the expensive cost of computation of BCa bootstrap, in this dataset, since there are 353,910 data points, we used percentile bootstrap as a main method. Figure~\ref{fig:KhonkeanCIraw} illustrates the bootstrap-percentile confidence intervals of mean incomes of all careers with an order. 

A government officer (EM-Officer) class is ranked as the 1st place of career that has the highest mean income, while a student class has the lowest mean income. 

\begin{figure}[!ht]
\centering
\includegraphics[width=1\columnwidth]{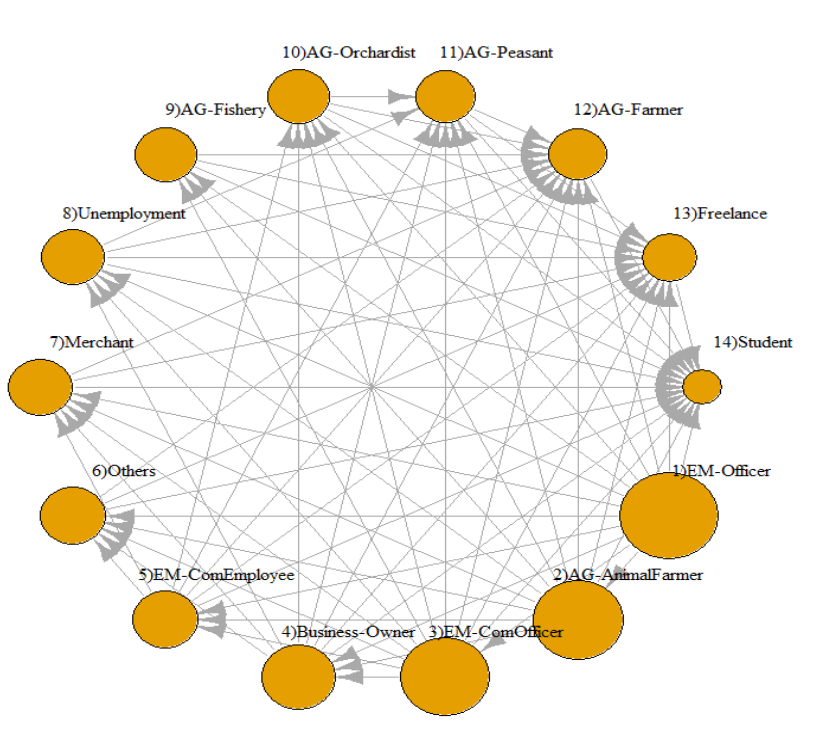}
\caption{A dominant-distribution network of household incomes of the population from Khon Kaen province categorized by careers.   A node size represents a magnitude of sample mean of incomes in a career node. }
\label{fig:KhonkeanDNet}
\end{figure}

Figure~\ref{fig:KhonkeanDNet} shows orders of dominant-distribution relations of career classes in a form of a dominant-distribution network. It shows that a government officer (EM-Officer) class dominates all career classes. In a dominant-distribution network, its network density represents a level of domination; higher network density implies there are many categories that are dominated by others. The network density of the network is 0.79. Since the network density is high, a higher-rank career class seems to dominate a lower-rank career class with high probability. This implies that different careers provide different incomes. In other words, gaps between careers are high. Figure~\ref{fig:KhonKeanDiffMeans} provides the magnitudes of income-mean difference between pairs of careers in the form of confidence intervals. It shows us that the majority of pairs of different careers have gaps of annual incomes at least 25,000 THB (around \$800 USD)!   

\begin{figure*}[!ht]
\centering
\includegraphics[width=2\columnwidth]{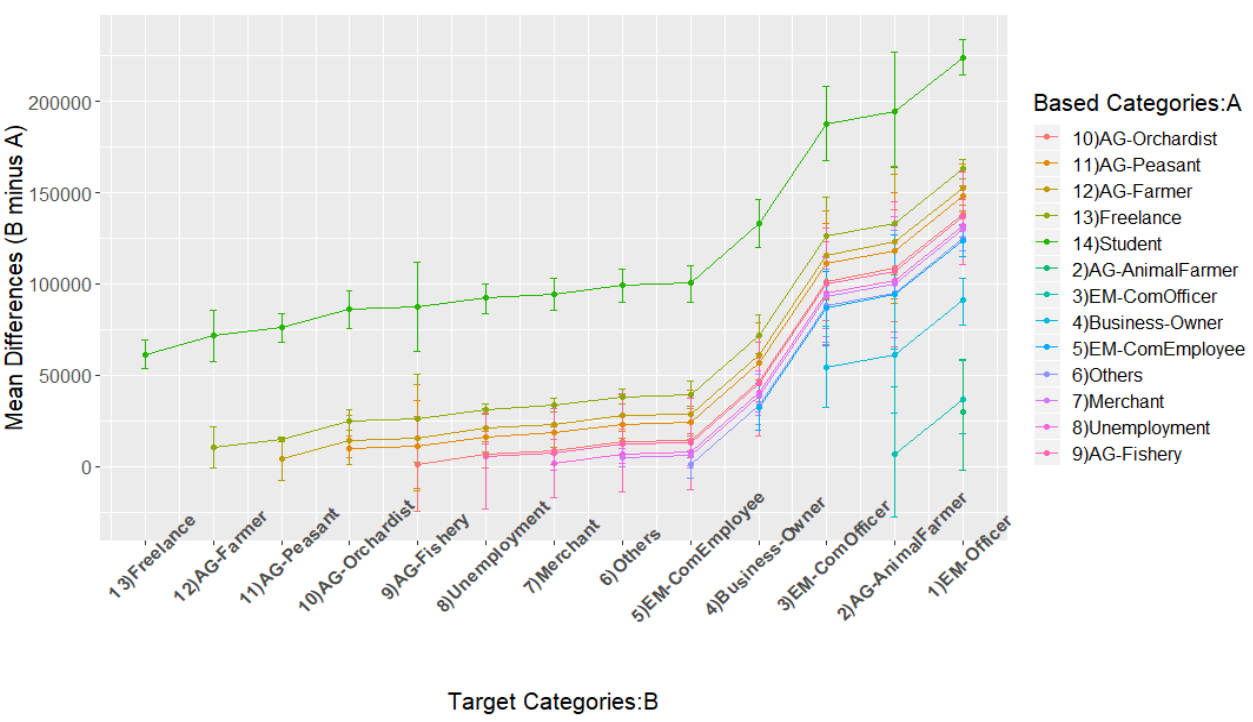}
\caption{Mean-difference confidence intervals of different careers based on household incomes of the population from Khon Kaen province categorized by careers. }
\label{fig:KhonKeanDiffMeans}
\end{figure*}

Since one of definitions of economic inequality is income inequality~\cite{kuznets1955economic,kawachi1999income,oishi2011income}, there is a high degree of career-income inequality in this area. In societies with a more equal
distribution of incomes, people are healthier ~\cite{kawachi1999income}. This inequality might lead to other issues such as health issue. Moreover, the income inequality is associate with happiness of people~\cite{oishi2011income}.  This case study shows that using our dominant-distribution network and mean-difference confidence intervals is a novel way of studying career-income inequality.

\subsection{Case study: Ordering aggregate-closing prices of NASDAQ stock market based on sectors}

This case study reveals the dynamics of sector domination in NASDAQ stock market.  We report the patterns of dominate sectors that change over time in the market. 

\begin{figure*}[!ht]
\centering
\includegraphics[width=2\columnwidth]{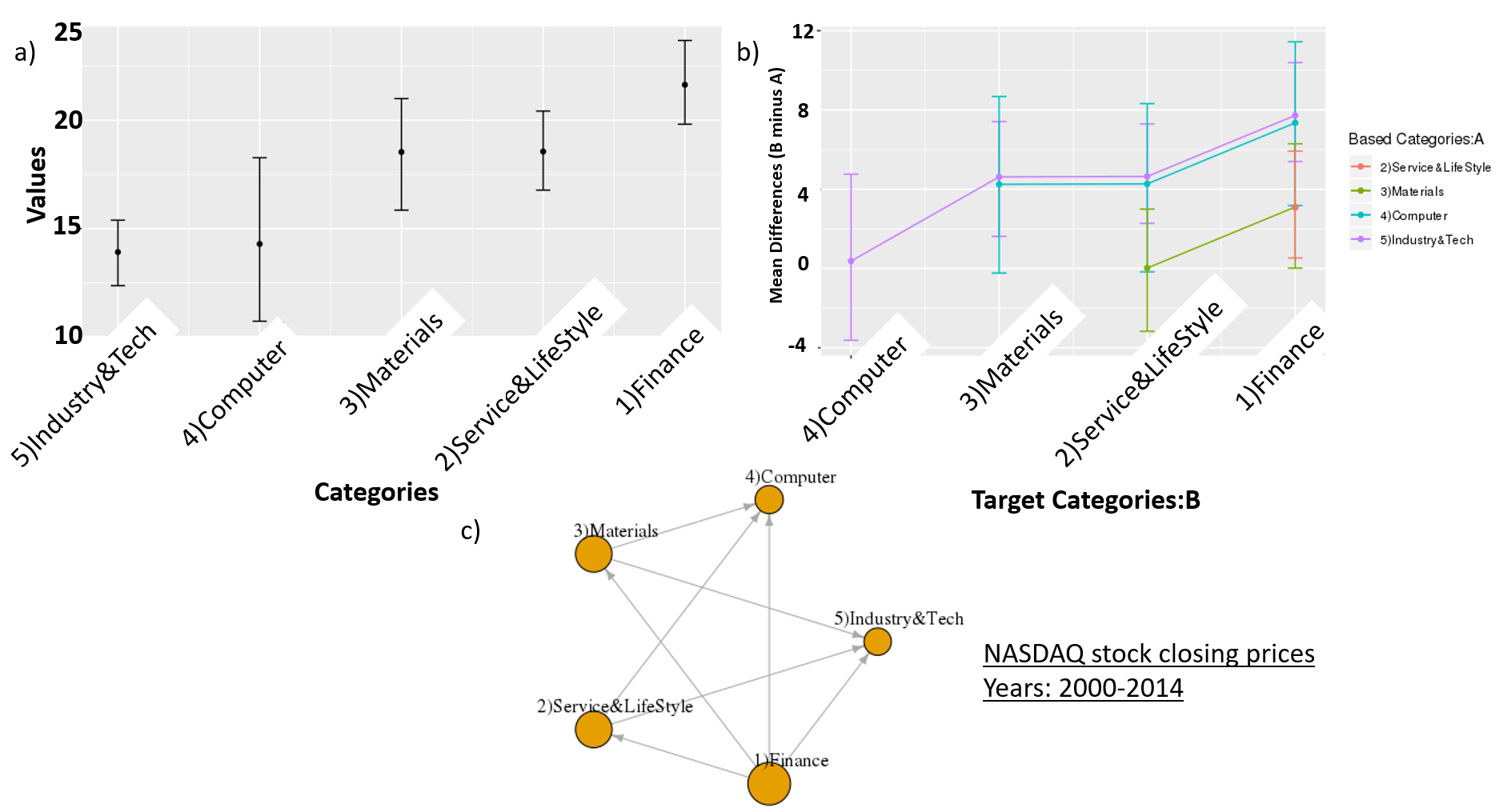}
\caption{The sectors ordering result of NASDAQ stock closing prices from 1060 companies between 2000 and 2014.    a) Confidence intervals of closing prices of sectors. b)  Confidence intervals of difference means of closing prices among sectors. c) A dominant-distribution network of sectors. }
\label{fig:NDres00-14}
\end{figure*}

Figure~\ref{fig:NDres00-14} shows the sectors ordering result of NASDAQ stock closing prices from 1060 companies between 2000 and 2014. The dominated sector is `Finance' sector that dominates all other sectors. Due to the high network density of the dominant-distribution network at 0.8, there are large gaps between sectors in this time interval.

\begin{figure*}[!ht]
\centering
\includegraphics[width=2\columnwidth]{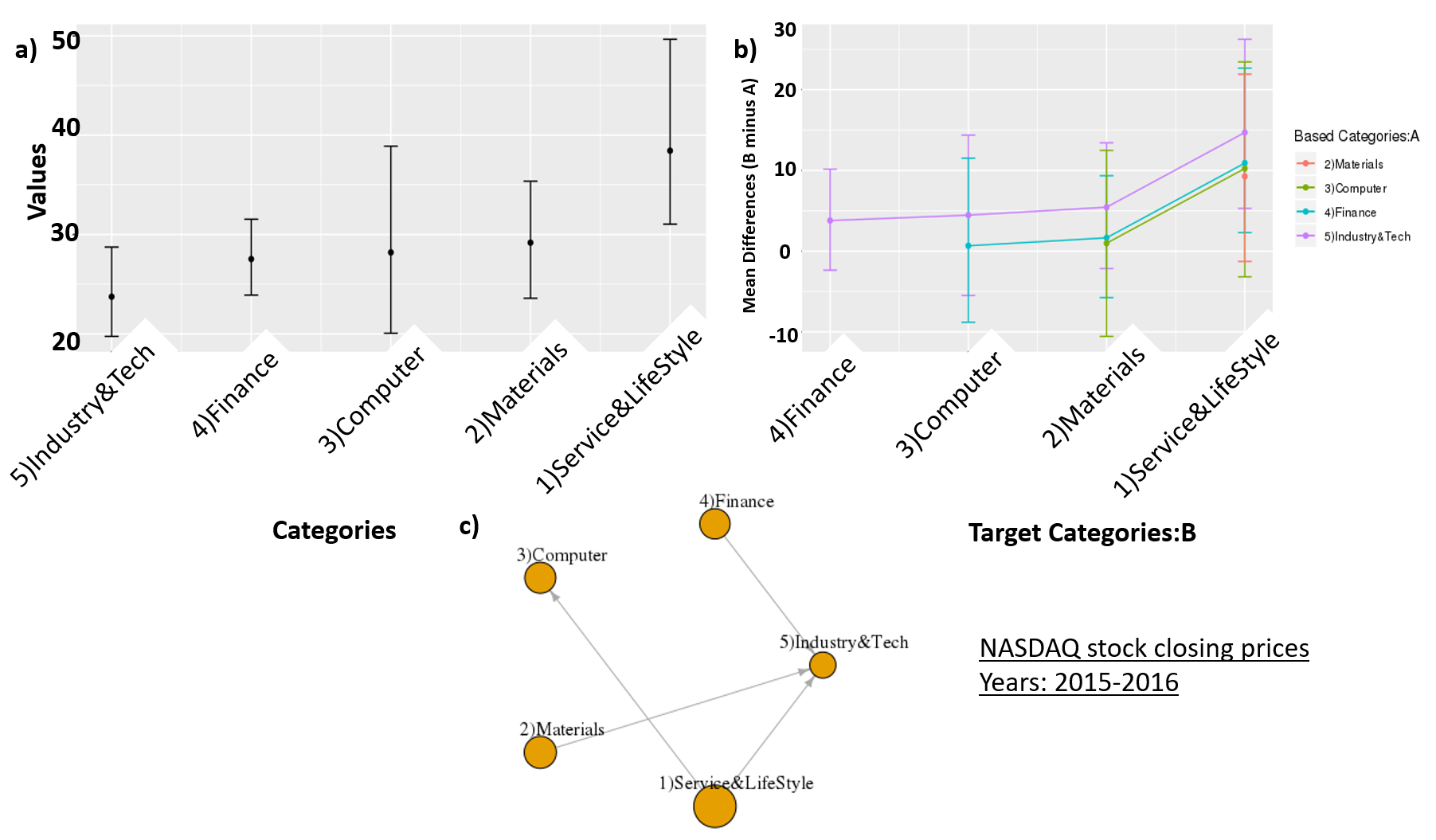}
\caption{The sectors ordering result of NASDAQ stock closing prices from 1060 companies between 2015 and 2016. We separated companies into five main sectors: `Service \& Life Style', `Materials', `Computer', `Finance', and `Industry \& Technology'. a) Confidence intervals of closing prices of sectors. b)  Confidence intervals of difference means of closing prices among sectors. c) A dominant-distribution network of sectors. }
\label{fig:NDres15-16}
\end{figure*}

On the other hand, in Figure~\ref{fig:NDres15-16}, the sectors result ordering of NASDAQ stock between 2015 and 2016 demonstrates that there is no sector that dominate all other sectors. The network density is 0.4, which implies that the level of domination is less than the previous interval. The Finance sector is ranked as 4th position in the order.  It is not because the Finance sector has a lower closing price in recent years, but all other sectors have higher closing prices lately. The computer sector has a higher closing price lately compared to the previous time interval, which is consistent with the current situation that the IT development (e.g. big data analytics, AI, block chain) impacts many business scopes significantly~\cite{du2018current}.  
\section{Conclusion}
\label{sec:conclusion}
In this paper, we proposed a framework that is able to infer orders of  categories based on their expectation of real-number values using Estimation statistics framework. Not only reporting whether an order of categories exists, but our framework also reports the magnitude of difference of each consecutive pairs of categories in the order using confidence intervals and a dominant-distribution network. In large dataset, our framework is scalable well using percentile bootstrap approach compared with the existing framework: DABESTR that uses BCa bootstrap. The proposed framework was applied to two real-world case studies: 1) ordering careers by 350,000 household incomes from the population of Khon Kaen province in Thailand, and 2) ordering sectors based on 1060 companies' closing prices of NASDAQ stock markets between years 2000 and 2016. The results of careers ordering showed income-inequality among different careers in a dominant-distribution network. The stock market results illustrated dynamics of sectors that dominate the market can be changed over time. The encouraging results show that our approach is able to be applied to any other research area that has category-real ordered pairs. Our proposed \textit{Dominant-Distribution Network} provides a novel approach to gain new insight of analyzing category orders. The software of this framework is available for researchers or practitioners with a  user-friendly R package at \cite{SharedLink}.


%

\appendices
\section{Problem formalization}
\label{sec:probformal}
In this section, we provide the details regarding that a dominant-distribution relation is a partial order as well as providing the problem formalization of \ddoip. In the first step, we provide the concept of equivalent distributions.

\begin{proposition}
\label{pros:EquivR}
Let $D_1,D_2$ be distributions such that $D_1 \preceq D_2$ and $D_2 \preceq D_1$, then $D_1,D_2$ are equivalent distributions denoted $D_1\equiv D_2$.
\end{proposition}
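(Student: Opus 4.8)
The plan is to unfold both hypotheses through Definition~\ref{def:EG-relation} and observe that together they force the two comparison probabilities to coincide, which is precisely what "indistinguishable under the dominance criterion" should mean. Concretely, let $X_1 \sim D_1$ and $X_2 \sim D_2$, and recall the standing assumption $P(X_1 \geq E[X_1]) = P(X_2 \geq E[X_2])$ that makes the relation well defined. The hypothesis $D_1 \preceq D_2$ gives $P(X_1 \geq E[X_2]) \leq P(X_2 \geq E[X_1])$, while $D_2 \preceq D_1$ — the same definition with the roles of the two distributions interchanged — gives $P(X_2 \geq E[X_1]) \leq P(X_1 \geq E[X_2])$. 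Chaining the two inequalities yields $P(X_1 \geq E[X_2]) = P(X_2 \geq E[X_1])$, so in particular neither $D_1 \prec D_2$ nor $D_2 \prec D_1$ can hold and the two distributions occupy the same position in every dominance comparison; this is exactly the situation we record by writing $D_1 \equiv D_2$.

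To justify that the notation $\equiv$ is not abusive, I would then check the three equivalence-relation axioms using the order structure already in place. Reflexivity $D \equiv D$ is immediate, since $P(X \geq E[X]) \leq P(X \geq E[X])$ holds trivially and hence $D \preceq D$. Symmetry is built into the definition, as the condition ``$D_1 \preceq D_2$ and $D_2 \preceq D_1$'' is visibly symmetric in $D_1$ and $D_2$. Transitivity follows from transitivity of $\preceq$, which is part of Theorem~\ref{thm:prtorder}: if $D_1 \equiv D_2$ and $D_2 \equiv D_3$, then $D_1 \preceq D_2 \preceq D_3$ and $D_3 \preceq D_2 \preceq D_1$, whence $D_1 \equiv D_3$. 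Consequently $\equiv$ partitions the relevant class of distributions into equivalence classes, on which $\preceq$ descends to a genuine partial order — antisymmetry of the quotient relation being exactly the statement that the $\equiv$-classes are its fibers.

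Since the argument is essentially bookkeeping with the definitions, I do not expect a genuine obstacle; the only point needing care is the intended reading of ``equivalent''. The proposition does not assert $D_1 = D_2$ as distributions — two genuinely different mixtures can satisfy the equality of comparison probabilities above — but only that they are interchangeable for every effect the dominance relation can detect. To forestall the stronger misreading I would add a one-line remark (or a small counterexample, e.g.\ two distributions with a common mean and a common value of $P(X\geq E[X])$ but different tails) making clear that $\equiv$ is a dominance-equivalence, not distributional identity. The whole proof uses only Definition~\ref{def:EG-relation} and Theorem~\ref{thm:prtorder}, both already established.
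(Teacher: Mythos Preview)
Your core argument is correct and matches the paper's proof exactly: both unfold the two hypotheses via Definition~\ref{def:EG-relation} to obtain the pair of opposite inequalities $P(X_1\geq E[X_2]) \leq P(X_2\geq E[X_1])$ and $P(X_2\geq E[X_1]) \leq P(X_1\geq E[X_2])$, and conclude equality (the paper phrases it as ruling out the strict case $D_1\prec D_2$ and $D_2\prec D_1$ by contradiction, which is the same thing).

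One organizational caution about your supplementary paragraph: you invoke Theorem~\ref{thm:prtorder} and describe it as ``already established,'' but in the paper's logical order Theorem~\ref{thm:prtorder} comes \emph{after} Proposition~\ref{pros:EquivR} and in fact cites it for antisymmetry, so this is circular as written. The fix is easy --- the only ingredient you actually need is transitivity of $\preceq$, which is Proposition~\ref{pros:trns} and does not depend on the present proposition --- but you should cite that directly rather than the packaged theorem. The equivalence-relation check and the remark that $\equiv$ is dominance-equivalence rather than distributional identity are nice additions that the paper does not include.
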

\begin{proof}

When $D_1 \preceq D_2$ and $D_2 \preceq D_1$, the first obvious case is $P(X_1\geq E[X_2])= P(X_2\geq E[X_1])$. For the case that  $D_1 \prec D_2$ and $D_2 \prec D_1$, this cannot happen because of contradiction. Hence,  $D_1 \preceq D_2$ and $D_2 \preceq D_1$ implies only $P(X_1\geq E[X_2])= P(X_2\geq E[X_1])$.   
\end{proof}

We provide a relationship between expectations of distribution and a dominant-distribution relation below.

\begin{proposition}
\label{pros:DomExp}
Let $D_1,D_2$ be distributions, and $X_1\sim D_1, X_2 \sim D_2$ s.t. $P(X_1\geq E[X_1])=P(X_2\geq E[X_2])$.  $E[X_1]\leq E[X_2]$ if and only if $D_1 \preceq D_2$.
\end{proposition}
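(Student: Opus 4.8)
The plan is to recast Definition~\ref{def:EG-relation} in terms of survival functions and then read everything off from monotonicity. Write $\mu_i = E[X_i]$ and let $G_i(t) = P(X_i \geq t) = 1 - F_i(t)$ be the survival function of $X_i$; since $X_i$ is a continuous random variable, $G_i$ is continuous and non-increasing on $\mathbb{R}$. The standing hypothesis $P(X_1 \geq E[X_1]) = P(X_2 \geq E[X_2])$ says exactly $G_1(\mu_1) = G_2(\mu_2)$; call this common value $c$. In this notation, $D_1 \preceq D_2$ is by definition the inequality $G_1(\mu_2) \leq G_2(\mu_1)$, so the claim to prove is simply $\mu_1 \leq \mu_2 \iff G_1(\mu_2) \leq G_2(\mu_1)$.

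For the forward implication I would assume $\mu_1 \leq \mu_2$. Monotonicity of $G_1$ gives $G_1(\mu_2) \leq G_1(\mu_1) = c$, and monotonicity of $G_2$ gives $G_2(\mu_1) \geq G_2(\mu_2) = c$. Chaining, $G_1(\mu_2) \leq c \leq G_2(\mu_1)$, which is precisely $D_1 \preceq D_2$. This direction is immediate and uses nothing beyond the two survival functions being non-increasing.

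For the converse I would argue by contraposition: assume $\mu_1 > \mu_2$ and show $G_1(\mu_2) > G_2(\mu_1)$, i.e. $D_1 \not\preceq D_2$. Monotonicity again yields $G_1(\mu_2) \geq G_1(\mu_1) = c$ and $G_2(\mu_1) \leq G_2(\mu_2) = c$, hence $G_1(\mu_2) \geq c \geq G_2(\mu_1)$; the real work is to upgrade one of these to a strict inequality. This is the main obstacle: both weak inequalities can be tight at once, namely when $G_1$ and $G_2$ are simultaneously constant (equal to $c$) on $[\mu_2,\mu_1]$. To exclude this I would invoke the regularity of $X_1$ (or $X_2$) as a genuine continuous variate — e.g. that $P(\mu_2 \leq X_1 < \mu_1) > 0$, which holds whenever $X_1$ has positive density somewhere on $(\mu_2,\mu_1)$ — so that $G_1(\mu_2) = c + P(\mu_2 \leq X_1 < \mu_1) > c \geq G_2(\mu_1)$, giving $D_1 \not\preceq D_2$ as required.

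In the write-up I would state this mild regularity as part of the ``continuous random variable'' hypothesis of Definition~\ref{def:EG-relation} and present the contrapositive cleanly. I would also add a remark on the degenerate case: if $G_1$ and $G_2$ are both flat at level $c$ on $[\mu_2,\mu_1]$ then $G_1(\mu_2) = c = G_2(\mu_1)$, so $D_1 \preceq D_2$ and $D_2 \preceq D_1$, whence $D_1 \equiv D_2$ by Proposition~\ref{pros:EquivR}; thus the equivalence in the statement can only fail inside a single $\equiv$-class, which is harmless for the partial-order structure used later.
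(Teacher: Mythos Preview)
Your approach is essentially the same as the paper's: both directions rely on the monotonicity of the tail probabilities $t \mapsto P(X_i \geq t)$, chained through the common value $c = P(X_i \geq \mu_i)$, with the backward direction handled by contraposition/contradiction. Your write-up is in fact more careful than the paper's --- the forward direction there is argued only heuristically (``covers almost areas of $D_2$''), and in the backward direction the paper simply asserts the strict inequalities $P(X_1 \geq \mu_1) < P(X_1 \geq \mu_2)$ and $P(X_2 \geq \mu_1) < P(X_2 \geq \mu_2)$ without the regularity justification you correctly flag as needed to rule out the flat-survival-function degeneracy.
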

\begin{proof}
In the forward direction, suppose $E[X_1]\leq E[X_2]$. Because  the center of $D_2$ is on the right of $D_1$ in the real-number axis,  hence, $P(X_2\geq E[X_1])$ covers almost areas of $D_2$ distribution except the area of $P(X_2<E[X_1])$. In contrast, $P(X_1\geq E[X_2])$ covers only a tiny area in the far right of $D_1$. This implies that  $P(X_1\geq E[X_2])\leq P(X_2\geq E[X_1])$ or $D_1 \preceq D_2$. 

In the backward direction, we use the proof by contradiction. Suppose $D_1 \preceq D_2$. Because $D_1 \preceq D_2$ implies $P(X_1\geq E[X_2])\leq P(X_2\geq E[X_1])$ and $P(X_1\geq E[X_1])=P(X_2\geq E[X_2])$, then we have the following implications.

Let assume that $E[X_2] < E[X_1]$.
This implies that $P(X_1\geq E[X_1])< P(X_1\geq E[X_2])$. Since $P(X_1\geq E[X_1])=P(X_2\geq E[X_2])$, we have

\begin{equation}
\label{eq:prf1}
    P(X_2\geq E[X_2])< P(X_1\geq E[X_2]).
\end{equation}
Assuming $E[X_2] < E[X_1]$, we also have

\begin{equation}
\label{eq:prf2}
    P(X_2\geq E[X_1])< P(X_2\geq E[X_2]).
\end{equation}
By combining inequation~\ref{eq:prf1} and inequation~\ref{eq:prf2}, we have

\begin{equation}
\label{eq:prf3}
    P(X_2\geq E[X_1])< P(X_1\geq E[X_2]).
\end{equation}
The inequation~\ref{eq:prf3} contradicts with the requirement of $D_1 \preceq D_2$, which is  $P(X_1\geq E[X_2])\leq P(X_2\geq E[X_1])$! 
Therefore, $E[X_1]\leq E[X_2]$.
\end{proof}

In the next step, we show that a dominant-distribution relation has a transitivity property.

\begin{proposition}
\label{pros:trns}
Let $D_1,D_2,D_3$ be distributions such that $D_1 \preceq D_2$, $D_2 \preceq D_3$, then $D_1\preceq D_3$.
\end{proposition}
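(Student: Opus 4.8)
The plan is to reduce the statement to the transitivity of the ordinary order $\le$ on the real line, by passing through Proposition~\ref{pros:DomExp}. Recall that the relation $D_1 \preceq D_2$ is only defined under the standing assumption $P(X_1 \ge E[X_1]) = P(X_2 \ge E[X_2])$ from Definition~\ref{def:EG-relation}; for a chain of three distributions we assume the common tail mass $P(X_i \ge E[X_i])$ takes the same value for $i = 1, 2, 3$, so that each of the pairwise relations below is well posed and Proposition~\ref{pros:DomExp} is applicable to each relevant pair.

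First I would apply the equivalence of Proposition~\ref{pros:DomExp} to the hypothesis $D_1 \preceq D_2$: since $X_1$ and $X_2$ satisfy the shared-tail condition, $D_1 \preceq D_2$ gives $E[X_1] \le E[X_2]$. In exactly the same way, applying Proposition~\ref{pros:DomExp} to the hypothesis $D_2 \preceq D_3$ yields $E[X_2] \le E[X_3]$.

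Next, by transitivity of $\le$ on $\mathbb{R}$ one immediately obtains $E[X_1] \le E[X_3]$. Finally, invoking the backward (``if'') direction of Proposition~\ref{pros:DomExp} for the pair $(D_1, D_3)$, whose shared-tail hypothesis holds by assumption, the inequality $E[X_1] \le E[X_3]$ yields $D_1 \preceq D_3$, which is the desired conclusion.

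The only real obstacle is the bookkeeping of the standing assumption: Proposition~\ref{pros:DomExp} is phrased for a single pair, so before chaining the pairwise equivalences one must be sure that the common value $P(X_i \ge E[X_i])$ is the \emph{same} number across all three distributions; once that is granted, the argument is nothing more than transitivity of $\le$. An alternative, seemingly assumption-light route would be to manipulate the defining probability inequalities $P(X_1 \ge E[X_2]) \le P(X_2 \ge E[X_1])$ and $P(X_2 \ge E[X_3]) \le P(X_3 \ge E[X_2])$ directly, but this is strictly more cumbersome and in effect re-derives Proposition~\ref{pros:DomExp} along the way, so routing through the expectation characterization is the clean choice.
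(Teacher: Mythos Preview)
Your proposal is correct and follows essentially the same route as the paper: apply Proposition~\ref{pros:DomExp} to each hypothesis to get $E[X_1]\le E[X_2]\le E[X_3]$, then return to $D_1\preceq D_3$. If anything, your version is a bit more careful---you explicitly invoke the backward direction of Proposition~\ref{pros:DomExp} for the pair $(D_1,D_3)$ and flag the need for the common tail-mass assumption across all three distributions, whereas the paper finishes with the informal remark that ``$D_3$ must be on the right hand side of $D_1$.''
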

\begin{proof}
According to Proposition~\ref{pros:DomExp}, $D_1 \preceq D_2$ implies  $E[X_1]\leq E[X_2]$. 

Now, we have $E[X_1]\leq E[X_2]\leq E[X_3]$. The $D_3$ distribution must be on the right hand side of $D_1$. Hence, $P(X_1\geq E[X_3])\leq P(X_3\geq E[X_1])$, which implies $D_1\preceq D_3$.  
\end{proof}

Now, we are ready to conclude that a dominant-distribution relation is a partial order.

\begin{theorem}
\label{thm:prtorder}
Given a set $S$ of continuous distributions s.t. for any pair $D_1,D_2 \in S$, given $X_1\sim D_1,X_2 \sim D_2$, $P(X_1\geq E[X_1])=P(X_2\geq E[X_2])$. The {\sc dominant-distribution relation} is a partial order over a set $S$~\cite{partialOrder}.
\end{theorem}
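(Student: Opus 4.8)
The plan is to verify the three defining axioms of a partial order for the {\sc dominant-distribution relation} $\preceq$ on $S$, namely reflexivity, antisymmetry, and transitivity, by assembling the propositions already established in this appendix. None of the three steps requires new computation; the work is in citing the correct prior result and handling the fact that antisymmetry holds only up to the equivalence $\equiv$ of Proposition~\ref{pros:EquivR}.

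First I would dispatch reflexivity: for any $D \in S$ with $X \sim D$, the hypothesis shared by all members of $S$ gives $P(X \geq E[X]) = P(X \geq E[X])$, and trivially $E[X] \leq E[X]$, so Proposition~\ref{pros:DomExp} yields $D \preceq D$. Next, transitivity is exactly the content of Proposition~\ref{pros:trns}: given $D_1 \preceq D_2$ and $D_2 \preceq D_3$ in $S$, that proposition concludes $D_1 \preceq D_3$. Finally, antisymmetry: if $D_1 \preceq D_2$ and $D_2 \preceq D_1$, then Proposition~\ref{pros:EquivR} gives $D_1 \equiv D_2$, i.e.\ the two distributions are indistinguishable with respect to $\preceq$, which is the appropriate notion of ``equality'' here.

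The one point that needs care — and the closest thing to an obstacle — is the interpretation of antisymmetry. Strictly speaking $\preceq$ is a partial order on the quotient $S/\!\equiv$; on $S$ itself it is a preorder whose induced relation on equivalence classes is a genuine partial order. I would state this explicitly so that the conclusion ``$\preceq$ is a partial order over $S$'' is read modulo $\equiv$, consistent with Proposition~\ref{pros:EquivR}. With that caveat in place, reflexivity, transitivity, and antisymmetry-up-to-$\equiv$ together establish the claim, and the existence of an ordering for any input set of category–real pairs follows because every finite preorder (equivalently, the partial order on its $\equiv$-classes) admits a linear extension.
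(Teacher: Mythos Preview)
Your proposal is correct and follows essentially the same route as the paper's own proof: verify reflexivity directly, invoke Proposition~\ref{pros:trns} for transitivity, and invoke Proposition~\ref{pros:EquivR} for antisymmetry. Your added caveat that antisymmetry holds only up to $\equiv$ (so that $\preceq$ is strictly a preorder on $S$ and a partial order on $S/\!\equiv$) is a point of extra rigor that the paper glosses over, but otherwise the arguments coincide.
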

\begin{proof} A relation is a partial order over a set $S$ if it has the following properties: Antisymmetry, Transitivity, and Reflexivity.
\squishlist
\item {\bf Antisymmetry:} if $D_1\preceq D_2$ and $D_2\preceq D_1$, then $D_1 \equiv D_2$ by Proposition~\ref{pros:EquivR}.
\item {\bf Transitivity:} if $D_1 \preceq D_2$, $D_2 \preceq D_3$, then $D_1\preceq D_3$ by Proposition~\ref{pros:trns}.
\item {\bf Reflexivity:} $\forall D, \; D\preceq D$.
\squishend
Therefore, by definition, the {\sc dominant-distribution relation} is a partial order over a set of continuous distributions.
\end{proof}

Suppose we have $D_1 \preceq D_2$ and $X_1\sim D_1, X_2\sim D_2$. We can have $Y=X_2-X_1$ as a random variable that represents the magnitude of difference between two distributions. Suppose $\mu_Y$ is the true mean of $Y$'s distribution, our next goal is to find the confidence interval of $\mu_Y$.

\begin{definition}[$\alpha$-mean-difference confidence interval]
\label{def:confIntv}
Given two continuous random variables $X_1 \sim \mathcal{D}_1$ and $X_2 \sim \mathcal{D}_2$ where $D_1,D_2$ are distributions, $Y=X_2-X_1$, and $\alpha \in [0,1]$. An interval $[l,u]$ is $\alpha$-mean-difference confidence interval if $P( l\leq \mu_Y \leq u )\geq 1-\alpha$.
\end{definition}

Now, we are ready to formalize \ddoip.

\begin{problem}[h!]
    \SetKwInOut{Input}{Input}
    \SetKwInOut{Output}{Output}
    \Input{A set  $\mathcal{S} = \{(x,c)\}$ s.t. $x$ is a realization of $X_c\sim D_c$, and $X_{c_1},X_{c_2}$ i.i.d. from the same $D_c$ if $c_1=c_2=c$. }
    \Output{Order of pairs of $D_{i} \preceq D_{j}$, and their $\alpha$-mean-difference confidence interval  $CI_{i,j}=[l_{i,j},u_{i,j}]$. }
    \caption{{\ddoip}}
	\label{prob1}
\end{problem}

\section{Statistical inference}
\label{sec:statInference}
 \subsection{Bootstrap approach}
 \label{sec:bootstrap}
Suppose we have $Y=X_2-X_1$ and $Y\sim D_Y$ with the unknown $\mu_Y$, we can use the mean $\bar{Y}=E[Y]$ as the point estimate of $\mu_Y$ since it is the unbiased estimator. We deploy the estimation statistics~\cite{cumming2013understanding,claridge2016estimation,ho2019moving}
, which is a framework that focuses on estimating an effect sizes, $Y$, of two distributions. Compared to  null hypothesis significance testing approach (NHST), estimation statistics framework reports not only whether two distribution are significantly different, but it also reports magnitudes of difference in the form of confidence interval. 

The estimation statistics framework uses Bootstrap technique~\cite{Efron1992} to approximately infer the bootstrap confidence interval of $\mu_Y$. Assuming that the number of times of bootstrapping is large, according to Central Limit Theorem (CLT), even though the underlying distribution is not normal distributed, summary statistics (e.g. means) of random sampling approaches a normal distribution.  Hence, we can use the normal confidence interval to approximate the confidence interval of $\mu_Y$.

\begin{theorem}[Central Limit Theorem (CLT)~\cite{Pishro-Nik2014introduction}]
Given $X_1,\dots,X_n$ be i.i.d. random variables with $E[X_i]=\mu <\infty$ and $0<\text{VAR}(X_i)=\sigma^2 < \infty$, and $\bar{X}=\frac{\sum_{i=1}^n X_i}{n}$. Then, the random variable

\begin{equation*}
    Z_n = \frac{\bar{X}-\mu}{\sigma/\sqrt{n}}
\end{equation*}

converges in distribution to a standard normal random variable as $n$ goes to infinity, 
that is 
\begin{equation*}
    \lim_{n\to\infty} P(Z_n\leq x) = \Phi(x), \forall x \in \mathbb{R},
\end{equation*}
where $\Phi(x)$ is the standard normal CDF.
\end{theorem}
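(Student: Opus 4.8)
The plan is to establish the convergence in distribution by the method of characteristic functions, combined with Lévy's continuity theorem. First I would reduce to standardized summands: set $Y_i = (X_i - \mu)/\sigma$, so that the $Y_i$ are i.i.d. with $E[Y_i] = 0$ and $\mathrm{VAR}(Y_i) = E[Y_i^2] = 1$, and note that $Z_n = \frac{1}{\sqrt{n}}\sum_{i=1}^n Y_i$. Let $\varphi(t) = E[e^{itY_1}]$ be the common characteristic function; it is well defined for every $t \in \mathbb{R}$ because $|e^{itY_1}| = 1$ (this is the reason for using characteristic functions rather than moment generating functions, which need not exist). By independence and the identical distribution of the $Y_i$, the characteristic function of $Z_n$ factorizes as $\varphi_{Z_n}(t) = \bigl(\varphi(t/\sqrt{n})\bigr)^n$.

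The next step is a second-order expansion of $\varphi$ at the origin. Since $E[Y_1^2] < \infty$, $\varphi$ is twice differentiable at $0$ with $\varphi(0) = 1$, $\varphi'(0) = i\,E[Y_1] = 0$, and $\varphi''(0) = -E[Y_1^2] = -1$, hence $\varphi(s) = 1 - \tfrac{s^2}{2} + o(s^2)$ as $s \to 0$. Substituting $s = t/\sqrt{n}$ gives, for each fixed $t$,
\begin{equation*}
\varphi_{Z_n}(t) = \Bigl(1 - \frac{t^2}{2n} + o\!\left(\tfrac{1}{n}\right)\Bigr)^{n} \longrightarrow e^{-t^2/2}, \qquad n \to \infty,
\end{equation*}
using the elementary fact that $(1 + a_n/n)^n \to e^{a}$ whenever $a_n \to a$ (valid also for complex $a_n$). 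The limit $e^{-t^2/2}$ is precisely the characteristic function of the standard normal law.

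Finally I would invoke Lévy's continuity theorem: since $\varphi_{Z_n}(t) \to e^{-t^2/2}$ pointwise and the limit is continuous at $t = 0$, it follows that $Z_n$ converges in distribution to a standard normal random variable. Because the standard normal CDF $\Phi$ is continuous on all of $\mathbb{R}$, convergence in distribution upgrades to the pointwise statement $\lim_{n\to\infty} P(Z_n \le x) = \Phi(x)$ for every $x \in \mathbb{R}$, as claimed.

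I expect the main obstacle to be the careful control of the Taylor remainder under only a finite-second-moment hypothesis: with no third moment available, the $o(s^2)$ term must be justified via a uniform bound such as $|e^{iu} - (1 + iu - \tfrac{u^2}{2})| \le \min(|u|^3/6,\, u^2)$ followed by dominated convergence inside the expectation, rather than a crude third-moment estimate. The remaining ingredients — Lévy's continuity theorem and the upgrade from weak convergence to pointwise CDF convergence at continuity points of the limit — are standard and would be cited rather than reproved.
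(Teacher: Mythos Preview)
Your proof is the standard characteristic-function argument and is correct, including your remark about handling the $o(s^2)$ remainder under only a second-moment assumption. However, the paper does not prove this statement at all: it is stated as a cited background result (the citation is to the textbook~\cite{Pishro-Nik2014introduction}) and is invoked only to justify the approximate normality underlying Lemma~\ref{lemma:confintv}. So there is no proof in the paper to compare against; your argument supplies a complete proof where the paper simply appeals to the literature.
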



\begin{lemma}
\label{lemma:confintv}
Given  $X_{1,1},\dots,X_{1,k}$ are random variables i.i.d. from $D_1$,  $X_{2,1},\dots,X_{2,k}$ are random variables i.i.d. from $D_2$, and $Y_1,\dots,Y_k$ are random variables where $Y_i=X_{2,i}-X_{1,i}$. 

Assuming that the number $k$ is large, the distribution of $Y_i$ is unknown with an unknown variance $\text{VAR}(Y_i)=\sigma^2_Y <\infty$.  Suppose $\bar{Y}$ is the sample mean of $Y_1,\dots,Y_k$, $\mu_Y=E[Y_i]$, and $s_{Y}$ is their standard deviation. Given that $\Phi(\cdot)$ is the standard normal CDF and $z_{\frac{\alpha}{2}}=\Phi^{-1}(1-\frac{\alpha}{2})$, then the interval

\begin{equation}
    CI_{\bar{Y}}=[\bar{Y} - z_{\frac{\alpha}{2} } \frac{s_{Y} }{\sqrt{k}}, \bar{Y} + z_{\frac{\alpha}{2} } \frac{s_{Y} }{\sqrt{k}} ]
\end{equation}

is approximately $(1-\alpha)100\%$ confidence interval for $\mu_Y$.
\end{lemma}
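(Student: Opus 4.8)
The plan is to obtain the interval as a direct consequence of the Central Limit Theorem applied to the sequence $Y_1,\dots,Y_k$, which are i.i.d.\ with finite mean $\mu_Y$ and finite variance $\sigma_Y^2$ by hypothesis. First I would invoke the CLT stated above to conclude that $Z_k=\frac{\bar{Y}-\mu_Y}{\sigma_Y/\sqrt{k}}$ converges in distribution to a standard normal variable as $k\to\infty$. The claim, however, is phrased in terms of $s_Y$ rather than the unknown $\sigma_Y$, so the crux of the argument is to replace $\sigma_Y$ by its estimator without destroying the limiting normality.

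Second, I would argue that $s_Y$ is a consistent estimator of $\sigma_Y$: by the weak law of large numbers $s_Y^2 \to \sigma_Y^2$ in probability, and the continuous mapping theorem then gives $\sigma_Y/s_Y \to 1$ in probability. Writing the studentized quantity as $\frac{\bar{Y}-\mu_Y}{s_Y/\sqrt{k}} = Z_k\cdot\frac{\sigma_Y}{s_Y}$ and applying Slutsky's theorem, I get that this studentized statistic also converges in distribution to a standard normal variable.

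Third, with $z_{\frac{\alpha}{2}}=\Phi^{-1}(1-\frac{\alpha}{2})$, the limit law yields $P\!\left(-z_{\frac{\alpha}{2}} \le \frac{\bar{Y}-\mu_Y}{s_Y/\sqrt{k}} \le z_{\frac{\alpha}{2}}\right) \to \Phi(z_{\frac{\alpha}{2}})-\Phi(-z_{\frac{\alpha}{2}}) = 1-\alpha$. Finally I would rearrange the event inside this probability: multiplying through by $s_Y/\sqrt{k}>0$ and isolating $\mu_Y$ turns it into exactly $\{\,\bar{Y} - z_{\frac{\alpha}{2}}\frac{s_Y}{\sqrt{k}} \le \mu_Y \le \bar{Y} + z_{\frac{\alpha}{2}}\frac{s_Y}{\sqrt{k}}\,\}$, i.e.\ $\mu_Y \in CI_{\bar{Y}}$. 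Hence $P(\mu_Y \in CI_{\bar{Y}}) \to 1-\alpha$, so for $k$ large the coverage is approximately $(1-\alpha)100\%$, which is the statement.

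The main obstacle is precisely the substitution of the unknown $\sigma_Y$ by the sample standard deviation $s_Y$: the bare CLT only controls $Z_k$, and upgrading to asymptotic normality of the studentized statistic requires the two extra ingredients (consistency of $s_Y^2$ and Slutsky's theorem), which are standard but not among the results quoted in the excerpt. Everything else — the algebraic rearrangement of the confidence event and reading off the standard normal quantiles — is routine. It is worth emphasizing that the word \emph{approximately} carries real weight here: exact coverage holds only in the $k\to\infty$ limit, consistent with the lemma's hypothesis that $k$ is large.
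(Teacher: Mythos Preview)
Your proposal is correct and follows the same CLT-based route as the paper. If anything, your treatment of the substitution $\sigma_Y \mapsto s_Y$ is more careful: the paper simply appeals to $s_Y$ being an ``unbiased estimator'' of $\sigma_Y$ and swaps it in, whereas you correctly identify that the justification is consistency of $s_Y$ together with Slutsky's theorem.
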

\begin{proof}
Since $k$ is large, the distribution of sample mean of $Y_1,\dots,Y_k$ follows the Central Limit Theorem. This implies that the random variable 

\begin{equation*}
    Z_k = \frac{\bar{Y}-\mu_Y}{\sigma_Y/\sqrt{k}}
\end{equation*}
has approximately $\mathcal{N}(0,1)$ distribution. Hence, $\bar{Y}$ is approximately normal distributed from $\mathcal{N}(\mu_Y,\sigma_Y/\sqrt{k})$. 
The $(1-\alpha)100\%$ confidence interval for $\bar{Y}$ is $[\mu_Y - z_{\frac{\alpha}{2} } \frac{\sigma_{Y}}{\sqrt{k}}, \mu_Y + z_{\frac{\alpha}{2} } \frac{\sigma_{Y}}{\sqrt{k}} ]$.

Since $\bar{Y}$ is the unbiased estimator of $\mu_Y$ and $s_{Y}$ is the unbiased estimator of $\sigma_{Y}$, we can have the approximation of $(1-\alpha)100\%$ confidence interval of $\mu_Y$ as follows. 

\begin{equation*}
    [\bar{Y} - z_{\frac{\alpha}{2} } \frac{s_{Y} }{\sqrt{k}}, \bar{Y} + z_{\frac{\alpha}{2} } \frac{s_{Y} }{\sqrt{k}} ]
\end{equation*}

\end{proof}

According to Lemma~\ref{lemma:confintv}, we need to access to a large number of $Y_1,\dots,Y_k$ to infer the confidence interval. We can generate $Y_1,\dots,Y_k$ s.t. $k$ is large using the bootstrap technique. The following theorem allows us to approximate the mean of $Y_i$ in the bootstrap approach.

\begin{theorem}[Bootstrap convergence~\cite{athreya1987bootstrap,bickel1981some}]
\label{thm:bootstrap}
Given $X_{1},\dots,X_{n}$ are random variables i.i.d. from an unknown distribution $D$ with $\text{VAR}(X_i)=\sigma^2 < \infty$. We choose $X'_1,\dots,X'_m$ from the set $\{X_{1},\dots,X_{n}\}$ by resampling with replacement. 
As $n,m$ approach $\infty$:

\squishlist
\item {\bf Asymptotic mean:} the conditional distribution of $\sqrt{m}(\bar{X}'-\bar{X})$ given $X_{1},\dots,X_{n}$ converges weakly to $\mathcal{N}(0,\sigma^2)$.
\item {\bf Asymptotic standard deviation:} $s_m \xrightarrow{} \sigma$ in conditional probability: that is for any positive $\epsilon$,
\begin{equation*}
 P(|s_m - \sigma|> \epsilon | X_{1},\dots,X_{n} ) \xrightarrow{} 0,
\end{equation*}
where $\bar{X}'=m^{-1}\sum_1^m X'_i$, $\bar{X}=n^{-1}\sum_1^n X_i$, and $s^2_m= m^{-1}\sum_1^m (X'_i-\bar{X}')^2$.
\squishend
\end{theorem}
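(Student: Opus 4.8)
The plan is to prove the two assertions along the classical route of Bickel--Freedman: transfer everything to the Mallows (quadratic Wasserstein) distance $d_2$, in which the resampling step and the central limit theorem both become continuity statements. Write $\hat{F}_n$ for the empirical distribution of $X_1,\dots,X_n$; conditionally on the sample, $X'_1,\dots,X'_m$ are i.i.d.\ $\hat{F}_n$, the mean of $\hat{F}_n$ is $\bar{X}$, and $\sqrt{m}(\bar{X}'-\bar{X})=m^{-1/2}\sum_{i=1}^m(X'_i-\bar{X})$, so the target is the $d_2$-distance from the conditional law of this sum to $\mathcal{N}(0,\sigma^2)$.

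First I would record the standard properties of $d_2$ on laws with finite second moment: (i) $d_2(G_k,G)\to 0$ is equivalent to weak convergence together with convergence of second moments; (ii) for independent summands, $d_2^2\!\big(\mathcal{L}(\textstyle\sum_{i=1}^m U_i),\mathcal{L}(\sum_{i=1}^m V_i)\big)\le\sum_{i=1}^m d_2^2(\mathcal{L}(U_i),\mathcal{L}(V_i))$, so $d_2$ of the $m^{-1/2}$-normalized sums is bounded by $d_2$ of one centered summand; (iii) $d_2$ is invariant under a common translation, with $d_2(F-a,F-b)=|a-b|$. Then, using $\mathrm{VAR}(X_i)=\sigma^2<\infty$, I would invoke the a.s.\ consistency $d_2(\hat{F}_n,F)\to 0$ (the only place the second moment enters on the data side; it is a quantile-coupling SLLN argument) and combine it via (iii) with $\bar{X}\to\mu$ a.s.\ to get $d_2\!\big(\mathcal{L}(X'_1-\bar{X}\mid\text{data}),\mathcal{L}(Z_1-\mu)\big)\to 0$ a.s., where $Z_1\sim F$. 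Independently, the $d_2$-version of the CLT gives $d_2\!\big(\mathcal{L}(m^{-1/2}\sum_{i=1}^m(Z_i-\mu)),\mathcal{N}(0,\sigma^2)\big)\to 0$. Chaining these with (ii) and the triangle inequality yields $d_2\!\big(\mathcal{L}(\sqrt{m}(\bar{X}'-\bar{X})\mid\text{data}),\mathcal{N}(0,\sigma^2)\big)\to 0$ a.s., which in particular gives the claimed conditional weak convergence (indeed a.s.-conditionally).

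For the asymptotic standard deviation I would write $s_m^2=m^{-1}\sum_{i=1}^m(X'_i)^2-(\bar{X}')^2$. From the first part, $\sqrt{m}(\bar{X}'-\bar{X})$ is conditionally $O_P(1)$, hence $\bar{X}'\to\mu$ in conditional probability. For $m^{-1}\sum(X'_i)^2$, its conditional mean is $n^{-1}\sum_j X_j^2\to E[X^2]$ a.s.\ by the SLLN, and its conditional variance is $m^{-1}\big(n^{-1}\sum_j X_j^4-(n^{-1}\sum_j X_j^2)^2\big)$, which I would bound by $m^{-1}\max_{j\le n}X_j^2\cdot(n^{-1}\sum_j X_j^2)$ and send to $0$ using the elementary fact $n^{-1}\max_{j\le n}X_j^2\to 0$ a.s.\ whenever $E[X^2]<\infty$ (valid for the usual $m=n$ bootstrap, or any regime with $n=O(m)$). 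Therefore $m^{-1}\sum(X'_i)^2\to E[X^2]$ in conditional probability and $s_m^2\to E[X^2]-\mu^2=\sigma^2$, giving $s_m\to\sigma$ in conditional probability.

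The main obstacle is this last estimate together with the $d_2$-CLT input: both amount to upgrading ordinary weak convergence to convergence of second moments, i.e.\ to a uniform-integrability/truncation bound. On the resampling side this forces control of a fourth-moment-type average $n^{-1}\sum X_j^4$ under only a second-moment hypothesis, which is exactly what the $n^{-1}\max_j X_j^2\to 0$ trick (and the mild coupling of the growth of $m$ to that of $n$) delivers; on the population side it needs the Lindeberg-type argument making $\{m^{-1/2}\sum_{i=1}^m(Z_i-\mu)\}_m$ have uniformly integrable squares. Everything else --- the $d_2$ calculus, $d_2(\hat{F}_n,F)\to 0$, $\bar{X}\to\mu$ --- is routine once these are in hand. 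A reserve alternative is to skip $d_2$ altogether and verify the conditional Lindeberg--Feller condition directly for the triangular array $\{X'_i-\bar{X}\}_{i\le m}$ using the same SLLN facts; the difficulty is identical, merely repackaged.
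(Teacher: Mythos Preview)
The paper does not prove this theorem at all: it is stated with citations to Athreya and to Bickel--Freedman and then used as a black box, with no proof or proof sketch given. So there is nothing in the paper to compare your argument against.

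That said, your proposal is essentially the original Bickel--Freedman argument via the Mallows distance $d_2$, and it is correct in outline. The chain $d_2(\hat F_n,F)\to 0$ a.s., subadditivity of $d_2^2$ under independent sums, and the $d_2$-CLT combine exactly as you describe to give the a.s.\ conditional weak convergence (in fact $d_2$-convergence, which is stronger). For $s_m$, your decomposition and the bound $n^{-1}\sum_j X_j^4\le(\max_j X_j^2)(n^{-1}\sum_j X_j^2)$ together with $n^{-1}\max_{j\le n}X_j^2\to 0$ a.s.\ under a second-moment assumption are the right moves. Your caveat that the variance bound needs $n=O(m)$ is honest and appropriate; the theorem as stated in the paper is vague about the joint growth of $m$ and $n$, and the classical result is typically stated for $m=n$. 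If you want to cover arbitrary $m,n\to\infty$ for the $s_m$ part, you can instead argue directly from $d_2$-convergence of the centered bootstrap law to $F-\mu$ (which you already have from property (iii) and $d_2(\hat F_n,F)\to 0$), since $d_2$-convergence entails convergence of second moments; this bypasses the fourth-moment-type estimate entirely.
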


From Theorem~\ref{thm:bootstrap}, when we increase the number of times we perform the resampling with replacement on $D_1,D_2$ to be large, we can approximate the $\bar{Y}$ using the bootstrap sample mean $\bar{Y}'$. The same applies for the standard deviation $s_{Y} $ that we can use its bootstrap version $s'_{Y}$ to approximate it.  By using $\bar{Y}',s'_{Y}$, we can approximate the confidence interval in Lemma~\ref{lemma:confintv}. 

 \subsection{Dominant-distribution relation inference}
 \label{sec:DDrelaionInfer}

According to Proposition~\ref{pros:DomExp}, $E[X_1]\leq E[X_2]$ implies $D_1 \preceq D_2$. Suppose that $\mu_1 = E[X_1]$ and $\mu_2=E[X_1]$ are also random variables.  If $P(\mu_1\leq \mu_2)$ or $P(\mu_2 - \mu_1 \geq 0) =1$, then $P(D_1 \preceq D_2) =1$. However, in reality,  $P(\mu_2 - \mu_1\geq 0)$ might not equal to one due to noise. Hence, we define the following notion of Dominant-distribution relation. 
 
 \begin{definition}[$\alpha$-Dominant-distribution relation]
 \label{def:EG-relation2}
Given two continuous random variables $X_1 \sim \mathcal{D}_1$ and $X_2 \sim \mathcal{D}_2$ where $D_1,D_2$ are distributions, and $\alpha \in [0,1]$. Suppose $\mu_1=E[X_1],\mu_2=E[X_2]$, we say that $D_2$ is dominant to $D_1$ if $P(E[\mu_2 - \mu_1]\geq 0) \geq 1-\alpha$; denoting $D_1 \preceq_\alpha D_2$.
 \end{definition}
 
 Suppose we have two empirical distribution $D'_1$ and $D'_2$. From Theorem~\ref{thm:bootstrap} and Lemma~\ref{lemma:confintv}, we can define $X_1$ and $X_2$ as random variables from sample-mean distributions $D_1,D_2$ of empirical distributions $D'_1$ and $D'_2$.  We can get $D_1$ and $D_2$ by bootstrapping data from $D'_1$ and $D'_2$. Suppose $Y=X_2 - X_1$, then, we can approximate the confidence interval of $\mu_Y = E[Y]$ with $\alpha$ using the interval $CI_{\bar{Y}}$ in Lemma $\ref{lemma:confintv}$.
 
 Next, we use $(1-\alpha)100\%$ confidence interval of $\mu_Y$ to infer whether $D_1 \preceq_\alpha D_2$. Given $\mu_y=\mu_2 - \mu_1$, according to the Definition~\ref{def:EG-relation2},  if $P(E[\mu_Y]\geq 0) \geq 1-\alpha$, then $D_1 \preceq_\alpha D_2$. We can approximate whether  $E[\mu_Y]\geq 0$ with the probability $1-\alpha$ by the approximate $(1-\alpha)100\%$ confidence interval of $\mu_Y$: $CI_{\bar{Y}}=[\bar{Y} - z_{\frac{\alpha}{2} } \frac{s_{Y} }{\sqrt{k}}, \bar{Y} + z_{\frac{\alpha}{2} } \frac{s_{Y} }{\sqrt{k}} ]$. If the lower bound $\bar{Y} - z_{\frac{\alpha}{2} } \frac{s_{Y} }{\sqrt{k}}$ is greater than zero, then $P(E[\mu_Y]\geq 0)$  is  approximately $1-\alpha$.
 
 In the aspect of hypothesis test, determining whether $D_1 \preceq_\alpha D_2$ is the same as testing whether the expectation of $X_1\sim D_1$ is less than the expectation of $X_2 \sim D_2$ where the null hypothesis is $E[X_2]-E[X_1]<0$ and the alternative hypothesis is $E[X_2]-E[X_1]\geq 0$. We can verify these two hypothesis by inferring the confidence interval of $\mu_Y=E[X_2]-E[X_1]$. If the lower bound of $\mu_Y$ is greater than zero with the probability $1-\alpha$, then we can reject the null hypothesis. Moreover, not only the confidence interval can test the null hypothesis, but it is also be able to tell us the magnitude of mean difference between $D_1$ and $D_2$. Hence, the confidence interval is more informative than the NHST approach. 
 
 Given a set of distributions $\{D_1,\dots,D_c\}$, in this paper, we choose to represent $\alpha$-Dominant-distribution relations using a network as follows.
 
 \begin{definition}[Dominant-distribution network]
 \label{def:DomDistNet}
Given a set of $c$ continuous distributions $S=\{D_1,\dots,D_c\}$ and $\alpha \in [0,1]$. Let $G=(V,E)$ be a directed acyclic graph. The graph $G$ is a Dominant-distribution network s.t. a node $i \in V$ represents $D_i$ and $(i,j)\in E$ if $D_j \preceq_\alpha D_i$.
 \end{definition}
 
 In the Section~\ref{sec:method}, we discuss about the proposed framework that can infer a Dominant-distribution network $G$ from a set of order-pairs of real value and category.

\section*{Acknowledgment}

The authors would like to thank the National Electronics and Computer Technology Center (NECTEC), Thailand, to provide our resources in order to successfully finish this work.

\ifCLASSOPTIONcaptionsoff
  \newpage
\fi



%

\bibliographystyle{IEEEtran}

%


\begin{IEEEbiography}[{\includegraphics[width=1in,height=1.25in,clip,keepaspectratio]{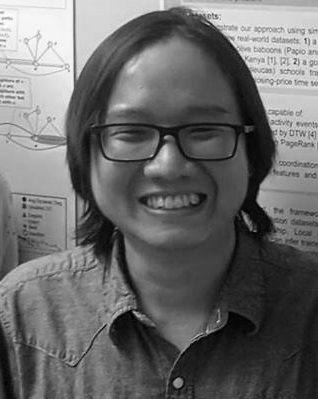}}]{Chainarong Amornbunchornvej} received the bachelor of engineering degree (with honor) in computer engineering in 2011 and the master's degree in telecommunications engineering in 2013, both from King Mongkut’s Institute of Technology Ladkrabang, Bangkok, Thailand. He received the Ph.D. degree in computer science from the University of Illinois at Chicago, IL, USA, in 2018. He is currently a researcher at National Electronics and Computer Technology Center, Thailand. He focuses on data science and statistical inference in general especially in time series analysis, causal inference, social network analysis, theoretical computer science, as well as bioinformatics.
\end{IEEEbiography}

\begin{IEEEbiography}[{\includegraphics[width=1in,height=1.25in,clip,keepaspectratio]{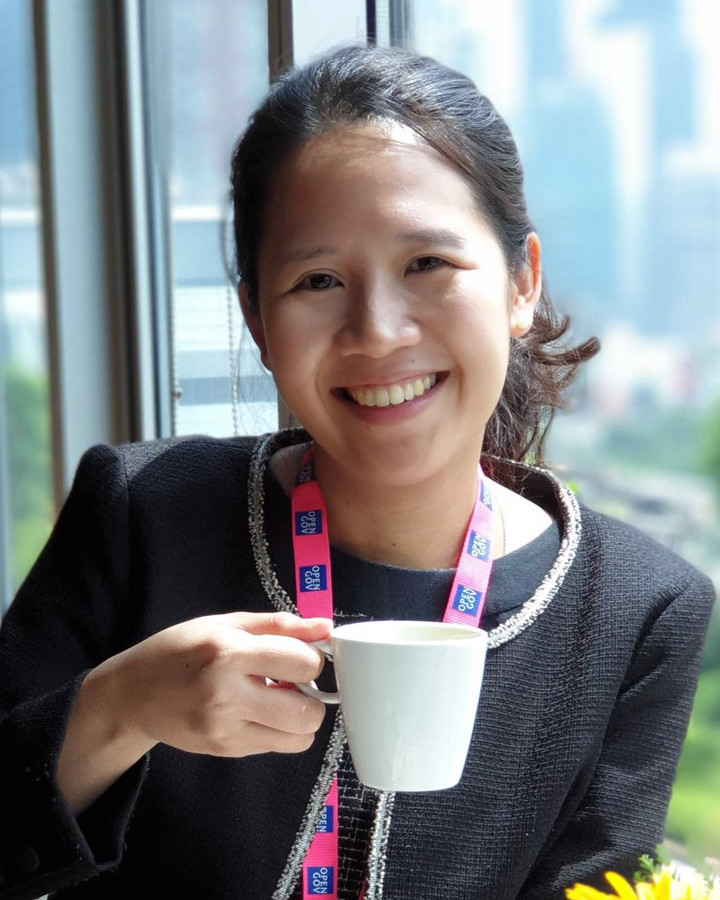}}]{Navaporn Surasvadi} is a researcher at the National Electronics and Computer Technology Center (NECTEC), Thailand. She received the BE in computer engineering (with first class honor) from Chulalongkorn University, Bangkok, Thailand and the MSc in Management Science and Engineering from Stanford University, CA, USA. She received her PhD in Operations Management from Leonard N. Stern School of Business, New York University, NY, USA in 2014. Her current research interests include data analytics and data visualization especially in strategic data for government policy planning, as well as operations management.
\end{IEEEbiography}


\begin{IEEEbiography}[{\includegraphics[width=1in,height=1.25in,clip,keepaspectratio]{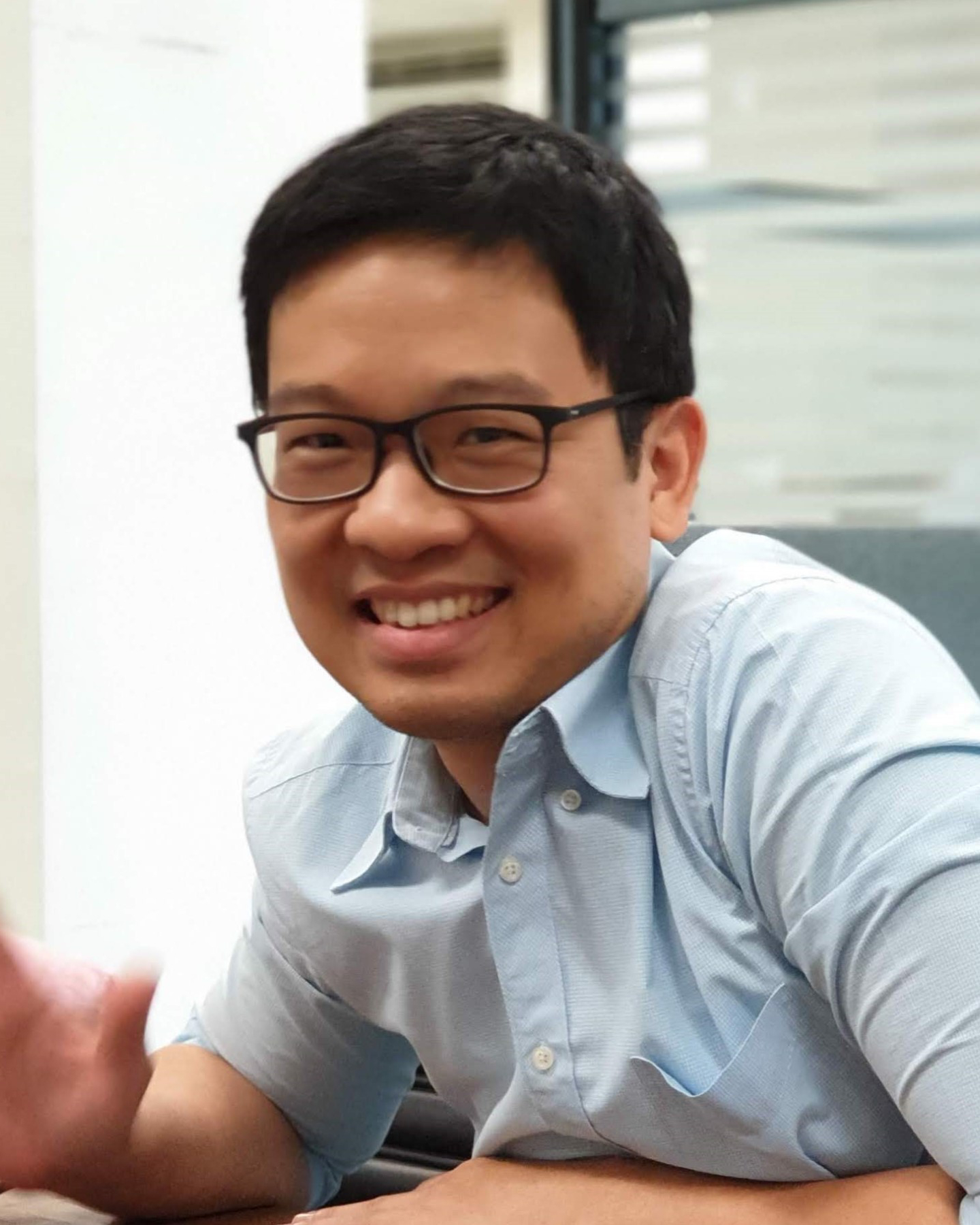}}]{Anon Plangprasopchok, Ph.D.} is a research scientist at National Electronics and Computer Technology Center (Thailand). He obtained a PhD in the Computer Science Department at the University of Southern California in 2010. His research interests lie 
in the area of applied data mining and machine learning techniques. He has been involved in several key government projects including revenue forecasting models (as a principle investigator) and data platform for poverty allievation (as a data scientist) for example.
\end{IEEEbiography}

\begin{IEEEbiography}[{\includegraphics[width=1in,height=1.25in,clip,keepaspectratio]{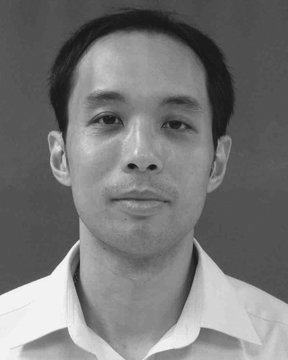}}]{Suttipong Thajchayapong}
received the M.S. and B.S. degrees in electrical and computer engineering from Carnegie Mellon University, Pittsburgh, PA, USA, and the Ph.D. degree in electrical and electronic engineering from Imperial College London, London, U.K.,He is a Researcher with National Electronics and Computer Technology Centre, National Science and Technology Development Agency, Pathumthani, Thailand. His research interests include intelligent transportation systems with emphasis on vehicular traffic monitoring and simulation, anomaly detection, and mobility and quality of service in wireless networks. Dr. Thajchayapong is a Member of ITS Thailand.
\end{IEEEbiography}



\end{document}